\newtheorem{definition}{Definition}
\newtheorem{example}{Example}
\newtheorem{theorem}{Theorem}
\begin{document}

\begin{frontmatter}
\title{Top-\textit{k} contrast order-preserving pattern mining}
  
\author[1,3]{Youxi Wu}	

\author[1,2]{Yufei Meng}	
  
	\author[3]{Yan Li}
		
	\author[4]{Lei Guo}
		
	\author[5]{Xingquan Zhu}
	
	\author[6]{Philippe Fournier-Viger}
	
	\author[7]{Xindong Wu}
		
		
		\address[1]{School of Artificial Intelligence, Hebei University of Technology, Tianjin 300401, China}
		
		\address[2]{Shijiazhuang University of Applied Technology, Shijiazhuang 050081, China}
		
		\address[3]{School of Economics and Management, Hebei University of Technology, Tianjin 300401, China}
		
		\address[4]{State Key Laboratory of Reliability and Intelligence of Electrical Equipment, Hebei University of Technology, Tianjin 300401, China}
		
		\address[5]{Department of Computer \& Electrical Engineering and Computer Science, Florida Atlantic University, FL 33431, USA}
		
		\address[6]{Shenzhen University, Shenzhen, China}
		
		\address[7]{Research Center for Knowledge Engineering at the Zhejiang Lab, Hangzhou 311121, China}
		
\begin{abstract}
Recently, order-preserving pattern (OPP) mining, a new sequential pattern mining method, has been proposed to mine frequent relative orders in a time series. Although frequent relative orders can be used as features to classify a time series, the mined patterns do not reflect the differences between two classes of time series well. To effectively discover the differences between time series, this paper addresses the top-\textit{k} contrast OPP (COPP) mining and proposes a COPP-Miner algorithm to discover the top-\textit{k} contrast patterns as features for time series classification, avoiding the problem of improper parameter setting. COPP-Miner is composed of three parts: extreme point extraction to reduce the length of the original time series, forward mining, and reverse mining to discover COPPs. Forward mining contains three steps: group pattern fusion strategy to generate candidate patterns, the support rate calculation method to efficiently calculate the support of a pattern, and two pruning strategies to further prune candidate patterns. Reverse mining uses one pruning strategy to prune candidate patterns and consists of applying the same process as forward mining. Experimental results validate the efficiency of the proposed algorithm and show that top-\textit{k} COPPs can be used as features to obtain a better classification performance.

\end{abstract}
		

\clearpage

			\begin{keyword}
{pattern mining \sep time series classification \sep order-preserving \sep contrast patter}
   
			\end{keyword}
		\end{frontmatter}
		
		\section{Introduction}
		A time series is a series of observed numerical values indexed in time order. Time series exist in various aspects of life; for example, there are time series of the daily stock price \cite{li2021tkde} and data collected by various sensors \cite{2021tkde}, and they exist in ECG/EEG data \cite{Dai2022tcyb}. Time series contains a lot of useful information for users. Various time series mining methods have been proposed. For example, Wu and Keogh \cite{wukeogh}  focused on research on time series anomaly detection and established a UCR time series anomaly archive. To discover patterns composed of strong and medium fluctuations that users are usually more concerned about, three-way sequential pattern mining {\cite{wu2022ntpm,min2020freq}} divides the fluctuations of a time series into strong, medium, and weak fluctuations, and patterns are composed of strong and medium fluctuations. To handle streaming trajectories,  real-time co-movement pattern mining was proposed {\cite{fang2020coming,lu2019real}}.  To classify time series, Bai et al. {\cite{bai2021time}} proposed an effective method for extracting both the mean features and trend features of a time series. To avoid a large deviation between the query pattern and time series, (delta, gamma) approximate sequence pattern mining methods {\cite{li2021netd,wu2020netd}} were investigated since they can evaluate local and global distances at the same time. All of the above methods adopt some symbolic methods to convert time series into symbols, since a time series consists of numerical values that are difficult to handle. Thus, various symbolic methods, such as the piecewise aggregate approximation  method {\cite{keogh2001loca}} and symbolic aggregate approximation  {\cite{lin2007expe}} method, were explored. However, these symbolic methods not only introduce data noise but also pay too much attention to the values of the time series and ignore the trend changes of the time series. 
		
		Recently, the order-preserving pattern (OPP) matching {\cite{kim2014orde}} method has been proposed, and it uses a group of relative orders to represent a pattern, which can represent a trend in a time series. Inspired by OPP matching, OPP mining, a new sequential pattern mining method, was proposed to discover frequent OPPs in a given time series {\cite{wu2022oppm}}. Then, order-preserving rule (OPR) mining was studied to discover the OPRs with high confidence rate {\cite{wu2022oprm}}. Although the experimental results showed that both OPPs \cite{wu2022oppm} and OPRs \cite{wu2022oprm} can be used to extract features for time series classification, these methods only discover frequent patterns (common characters) as features but neglect the differences between different classes. For example, in Fig. \ref{database}, the OPP of the sub-time series (2,6) in $\textbf{t}_1$ is (1,2), since 2 is smaller than 6. Similarly, the OPPs of the sub-time series (2,4) in $\textbf{t}_2$, (2,5) in $\textbf{t}_3$, (4,7) in $\textbf{t}_4$, (6,9) in $\textbf{t}_5$, and (5,7) in $\textbf{t}_6$ are also (1,2). Thus, pattern (1,2) is a frequent OPP and occurs in each time series in $D_+$ and $D_-$, which cannot show the significant differences in different classes.
		Therefore, it is difficult to use frequent OPPs as classification features to achieve a good classification performance.
		
		\begin{figure}
			\centering
			\includegraphics[width=1\linewidth]{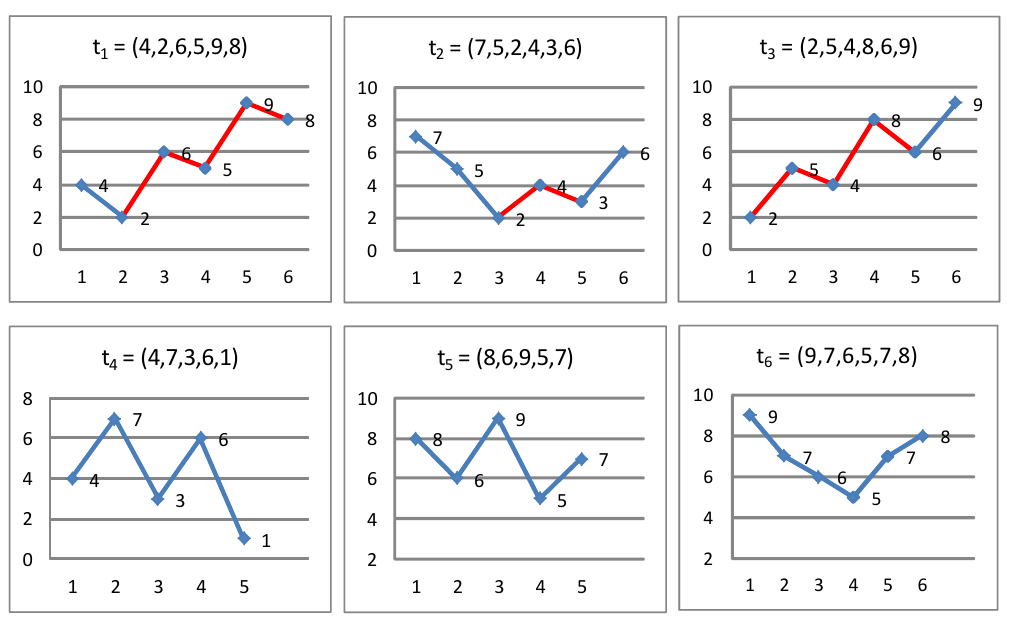}
			\captionsetup{font=footnotesize} 
			\caption{Six time series with two labels $D_+ = \{\textbf{t}_1, \textbf{t}_2, \textbf{t}_3\}$ and $D_- = \{\textbf{t}_4, \textbf{t}_5, \textbf{t}_6\}$. The OPP of sub-time series (2,6,5) in $\textbf{t}_1$ is (1,3,2), since 2 is the smallest value, 6 is the third-smallest value, and 5 is the second-smallest value. Similarly, the sub-time series (5,9,8) in $\textbf{t}_1$ is also (1,3,2). Thus, the number of occurrences of (1,3,2), as an OPP, is 2 in $\textbf{t}_1$.}
			\label{database}
		\end{figure}
		
		
		
		We note that contrast pattern mining {\cite{cpmsurvey,li2021cpma,li2020cont,wu2021topk}} (or distinguishing pattern mining {\cite{wu2019mini}} or discriminative pattern mining {\cite{he2019sign,he2017cond}}) can discover patterns with significant differences in different classes, which means that this approach is used to mine patterns that occur frequently in one class and infrequently in the other. The mined patterns can be used as the features for a sequence classification task \cite{smedt2020tkde}, and they can improve the classification accuracy and the interpretability of the classification model since these patterns have good discrimination. For example, the OPP of the red parts in Fig. \ref{database} is (1,3,2) and pattern (1,3,2) only occurs in $D_+$, not in $D_-$. Therefore, it is easy to classify the time series using pattern (1,3,2), which is a contrast pattern. This example illustrates that features based on contrast patterns will yield a better classification performance than features based on frequent patterns.



		Inspired by OPP mining, this paper proposes COPP mining to discover contrast patterns that can be used as features for time series classification. Generally, users should set two thresholds, the frequent threshold and the infrequent threshold, to mine the contrast patterns. However, it is difficult for users to set the appropriate thresholds without prior knowledge. To solve this issue, we address the top-\textit{k} contrast pattern mining, which discovers the top-\textit{k} patterns with the largest contrasts between positive and negative classes. The main contributions  are as follows. 
		
		1) To extract features for time series classification, we explore COPP mining and propose the COPP-Miner algorithm to discover the top-\textit{k} COPPs.
		
		2) COPP-Miner is composed of three parts: extreme point extraction (EPE) to reduce the length of the  time series, forward mining and reverse mining to mine the top-\textit{k} COPPs. 
		
		3) Both forward mining and reverse mining contain three steps: group pattern fusion strategy to generate candidate patterns,  support rate calculation (SRC) method to efficiently calculate the support of a pattern, and pruning strategies to further prune candidate patterns.
		
		4) Experimental results on real time series databases show that COPP-Miner has a better running performance than other competitive algorithms. More importantly, the mined patterns can be used as features to obtain a better classification performance than other methods.
		
		
		The rest of this paper is organized as follows. Section \uppercase\expandafter{\romannumeral2} introduces the related work. Section \uppercase\expandafter{\romannumeral3} defines the research problems. Section \uppercase\expandafter{\romannumeral4} proposes the COPP-Miner algorithm. 
		Section \uppercase\expandafter{\romannumeral5} carries out a comparative experimental analysis. Section \uppercase\expandafter{\romannumeral6} gives the conclusion of this paper.
		
\section{Related Work}
Sequential pattern mining (SPM) is an important knowledge discovery method {\cite{wu2022thee,li2021apinm}} and has been widely applied in the fields of social security fraud detection {\cite{okolica2020sequ}}, intelligent life detection {\cite{gan2019asur}}, and time series analysis {\cite{safari2022atyp}}. Moreover, if a sequence is added with time information, it forms episode mining \cite {episodepattern, episoderule}.		However, due to the complexity of real life, the limitations of traditional methods have been gradually exposed. Therefore, according to the different needs of users, researchers have put forward a variety of extension methods, such as repetitive SPM {\cite{wu2018nose,truong2019effi,wu2014mini}}, co-occurrence SPM \cite{cooccpattern,mcor}, spatial co-location pattern mining {\cite{wang2022pref,wang2018redu}}, negative SPM {\cite{Dong2019tnnls, wu2022negative,dong2020anef}}, outlying SPM {\cite{wang2020effi}},  high utility SPM {\cite{ishita2022newa,gan2021tkde, gantcyb}}, and high-average utility SPM\cite {weisong2021, wu2021haop,wu2021hanp, Truongtkde, jinkais2020}. However, these SPM methods mainly focus on single-class sequence datasets and cannot be applied to a multi-class sequence database to find patterns with significant differences. Therefore, researchers have put forward some new pattern mining methods, such as contrast SPM {\cite{wu2019min}} (or discriminative SPM {\cite{mathonat2021anyt}}). 

However, most classical data mining algorithms mainly focus on discrete sequences and cannot run directly on time series data. The inherent structural characteristics of a time series, such as its high dimensions and magnanimity, not only reduce the performance of common data mining algorithms but also affect the accuracy and effectiveness of the final pattern mining results. Therefore, it is necessary to preprocess the original time series to convert them into other forms using two approaches: piecewise linear representation and symbolic representation. Traditional piecewise linear representation algorithms mainly perform dimensionality reduction. For example, Keogh et al. {\cite{keogh2002anon}} proposed a bottom-up piecewise linearization method. Symbolic representation is another general method. For example, Lin et al. {\cite{lin2007expe}} proposed symbolic aggregation approximation  to convert digital data into a symbolic form. However, various kinds of noise are inevitably introduced during the process of converting a time series into a symbol series, which increases the difficulty of pattern matching and makes it difficult to grasp the key trend in the time series. 


Recently, the OPP matching method has been proposed {\cite{cho2015afas}}, which directly uses the relative order of numerical values to represent the rank of an element in a time series without mathematical transformation or symbolization, and it intuitively represents the fluctuation trend in a time series. Inspired by OPP matching, Wu et al. {\cite{wu2022oppm}} proposed an OPP-Miner algorithm, paying more attention to the fluctuation trend of the time series during the mining process, to find all frequent sub-time series with the same OPPs. To further improve the efficiency of the algorithm, Wu et al. {\cite{wu2022oprm}} proposed a pattern mining algorithm for order-preserving rules. During the mining process, the matching results of sub-patterns were used to calculate the support of the super-pattern, which greatly improved the efficiency of the algorithm. 

However, it is difficult to form classification features for time series of different classes  using OPP mining \cite {wu2022oppm} and OPR mining \cite{wu2022oprm}. Therefore, we introduce the concept of contrast pattern, which refers to patterns that occur frequently in one class and infrequently in the other. These patterns help to improve not only the classification performance but also the interpretability of the classification model. Traditional contrast pattern mining {\cite{yang2015mini}} requires the user to set two thresholds, the frequent  and infrequent thresholds. However, it is difficult for users to set these thresholds for feature extraction. To tackle these problems, we focus on discovering the top-\textit{k} COPPs for time series classification.

\section{Problem Definition}

A time series with length \textit{n}, expressed as $\textbf{t} = (t_1,\ldots,t_i,\ldots,t_n)(1 \leqslant i \leqslant n)$, is a series of observed numerical values indexed in time order, where $t_i$ is a numerical value. Suppose  a time series binary classification database \textit{D} has \textit{x} positive time series and (\textit{y}-\textit{x}) negative time series, i.e., $D = \{ {\textbf{t}}_1, {\textbf{t}}_2, \ldots , {\textbf{t}}_{x}, {\textbf{t}}_{x+1}, \ldots , {\textbf{t}}_y\},\ D_+ = \{{\textbf{t}}_1, {\textbf{t}}_2, \ldots , {\textbf{t}}_{x}\},$ and $D_- = \{ {\textbf{t}}_{x+1}, \ldots , {\textbf{t}}_y\},$ where  $\textbf{t}_{j} \ (1 \leqslant j \leqslant y)$ is a time series.

\begin{definition}\label{definition1}
	The relative order of a value $v_{i}$ in a group of values \textbf{v} is its rank order, denoted as $R_{\textbf{v}}(v_i)$
\end{definition}

\begin{definition}\label{definition2}
	A pattern represented by a group of relative orders is called an order-preserving pattern (OPP), denoted as $P(\textbf{p}) = (R_{\textbf{p}}(p_{1}),R_{\textbf{p}}(p_{2}),\ldots,R_{\textbf{p}}(p_{m}))$.
\end{definition}

\begin{example}\label{example1}
	Suppose that we have a group of numerical values \textbf{v} = (4,2,6,5). We know that $R_{\textbf{v}}$(4) = 2 since element 4 is the second-smallest element. Similarly, $R_{\textbf{v}}$(2) = 1, $R_{\textbf{v}}$(6) = 4, and $R_{\textbf{v}}$(5) = 3. Hence, the OPP of \textbf{v} is $P(\textbf{v})$ = (2,1,4,3).
\end{example}

\begin{definition}\label{definition3}
	Suppose that we have a time series \textbf{t} = ($t_1,t_2,\ldots,t_n$) and a pattern $\textbf{p} = (p_{1},p_{2},\ldots,p_{m})$. If there is a sub-time series \textbf{i}= $(t_{k},t_{k+1},\ldots,t_{k+m-1})$ (1 $\leqslant$ k $\leqslant$ n-m+1) that satisfies $P(\textbf{i}) = P(\textbf{p})$, then \textbf{i} is an occurrence of pattern \textbf{p} in \textbf{t}, and the last position of \textbf{i}, $<$k+m-1$>$, is used to represent this occurrence. The number of occurrences of \textbf{p} in \textbf{t} is the support, denoted as \textit{sup}(\textbf{p}, \textbf{t}).
\end{definition}

\begin{definition}\label{definition4}
	The density of a pattern \textbf{p} in \textbf{t}, denoted as \textit{den}(\textbf{p}, \textbf{t}), is the ratio of its support \textit{sup}(\textbf{p}, \textbf{t}) to the length of \textbf{t}, i.e., \textit{den}(\textbf{p},\textbf{t}) = \textit{sup}(\textbf{p}, \textbf{t})/$|\textbf{t}|$.  If \textit{den}(\textbf{p}, \textbf{t}) is greater than a given density threshold \textit{minden}, then the count \textit{C}(\textbf{p}, \textbf{t}) is 1; otherwise, \textit{C}(\textbf{p}, \textbf{t}) is 0. The support rate of a pattern \textbf{p} in a time series binary classification database \textit{D} is the sum of the ratio of the counts for all time series to the number of time series, denoted as \textit{r}(\textbf{p}, D), i.e., $r(\textbf{p}, D) = \sum_{i=1}^{|D|} \frac{C(\textbf{p}, \textbf{t}_i)}{|D|}$.		
	
\end{definition}

\begin{example}\label{example2}
	Suppose that we have a time series \textbf{t} = (4,2,6,5,9,8), \textbf{p} = (1,3,2), and \textit{minden} = 0.1. There are two sub-time series \textbf{e} = (2,6,5) and \textbf{f} = (5,9,8) whose relative order is \textbf{p}, i.e., $P(\textbf{e}) = P(\textbf{f}) = P(\textbf{p})$. Thus, \textbf{e} and \textbf{f} are two occurrences of \textbf{p} in \textbf{t}, i.e., \textit{sup}(\textbf{p}, \textbf{t}) = 2. According to Definition \ref{definition4}, \textit{den}(\textbf{p}, \textbf{t}) = \textit{sup}(\textbf{p}, \textbf{t})/$|\textbf{t}|$ = 2$/$6 $>$ 0.1. Hence, the count is 1, i.e., \textit{C}(\textbf{p}, \textbf{t}) = 1.
\end{example}

\begin{definition}\label{definition5}
	Suppose that a time series binary classification database \textit{D} is composed of $\textit{D}_+$ and $\textit{D}_-$. The support rates of \textbf{p} in $\textit{D}_+$ and $\textit{D}_-$ are \textit{r}(\textbf{p}, $\textit{D}_+$) and \textit{r}(\textbf{p}, $\textit{D}_-$), respectively. The contrast rate of \textbf{p} in \textit{D} is denoted as \textit{c}(\textbf{p}, \textit{D}), where \textit{c}(\textbf{p}, \textit{D}) = $|\textit{r}(\textbf{p}, \textit{D}_+)-\textit{r}(\textbf{p}, \textit{D}_-)|$.
\end{definition}

\begin{definition}\label{definition6}
	Our goal is to mine the top-\textit{k} COPPs in the time series binary classification database \textit{D} with a density threshold \textit{minden}.
\end{definition}

\begin{table}[!htb]
	\renewcommand{\arraystretch}{1.1}	
	\footnotesize
	\captionsetup{font=footnotesize} 
	\caption{Time series database \textit{D}}
	\centering
	\label{tab1}
	\tabcolsep 12pt 
	\begin{tabular}{ccc}
		\hline\noalign{\smallskip}
		tid & Time series  &Class\\\hline
		1 & 4, 2, 6, 5, 9, 8 & $\textit{D}_+$\\
		2 & 7, 5, 2, 4, 3, 6 & $\textit{D}_+$\\
		3 & 2, 5, 4, 8, 6, 9 & $\textit{D}_+$\\
		4 & 4, 7, 3, 6, 1 & $\textit{D}_-$\\
		5 & 8, 6, 9, 5, 7 & $\textit{D}_-$\\
		6 & 9, 7, 6, 5, 7, 8 & $\textit{D}_-$\\
		\noalign{\smallskip}\hline
	\end{tabular}
\end{table}

\begin{example}\label{example3}
	Suppose that we have a time series binary classification database \textit{D}, shown in Table \ref{tab1}.
	According to Example \ref{example2}, we know that the support of \textbf{p} = (1,3,2) in $\textbf{t}_1$ is 2. Similarly, we know that the supports of \textbf{p} = (1,3,2) for $\textbf{t}_2,\ \textbf{t}_3,\ \textbf{t}_4,\ \textbf{t}_5$, and $\textbf{t}_6$ are 1, 2, 0, 0, and 0, respectively. The lengths of $\textbf{t}_1$ to $\textbf{t}_6$ are 6, 6, 6, 5, 5, and 6, respectively. Therefore, the densities of pattern \textbf{p} for $\textbf{t}_1$ to $\textbf{t}_6$ are 0.33, 0.17, 0.33, 0, 0, and 0, respectively. Suppose that we set \textit{minden} = 0.1. Then, according to Definition \ref{definition5}, $C(\textbf{p}, \textbf{t}_1) = C(\textbf{p}, \textbf{t}_2) = C(\textbf{p}, \textbf{t}_3)$ = 1 and $C(\textbf{p}, \textbf{t}_4) = C(\textbf{p}, \textbf{t}_5) = C(\textbf{p}, \textbf{t}_6)$ = 0. Since $\textbf{t}_1$, $\textbf{t}_2$, and $\textbf{t}_3$ are in $D_+,\ r(\textbf{p}, D_+)$ = 3/3 = 1. Moreover, $r(\textbf{p}, D_-)$ = 0/3 = 0. Hence, $c(\textbf{p}, D) = r(\textbf{p}, D_+)-r(\textbf{p}, D_-)$ = 1.
	Similarly, the densities of \textbf{q} = (2,3,1) for $\textbf{t}_1$ to $\textbf{t}_6$ are 0, 0, 0, 0.4, 0.2, and 0, respectively. Thus, $r(\textbf{q}, D_+)$ = 0 and $r(\textbf{q}, D_-)$ = 0.67. Hence $c(\textbf{q}, D) = r(\textbf{q}, D_-)-r(\textbf{}q, D_+)$ = 0.67. 	
\end{example}

For clarification, the main symbols are listed in Table \ref{tab2}.
\begin{table}[!htb]
	\renewcommand{\arraystretch}{1.1}	
	\footnotesize
	\captionsetup{font=footnotesize} 
	\caption{Notations}
	\centering
	\label{tab2}
	\tabcolsep 5pt 
	\begin{tabular}{cc}
		\hline\noalign{\smallskip}
		Symbol & Description\\\hline
		\textbf{t} & A time series ($t_1, t_2,\ldots, t_n$) with length \textit{n}\\
		\textit{D} & A database containing \textit{y} time series$\{ {\textbf{t}}_1, {\textbf{t}}_2, \ldots , {\textbf{t}}_x, {\textbf{t}}_{x+1}, \ldots , {\textbf{t}}_y\}$\\
		$\textit{D}_+$$/$$\textit{D}_-$ & A positive$/$negative class time series\\
		\textbf{p} & A pattern ($p_{1}, p_{2}, \ldots, p_{m}$) with length \textit{m}\\
		$R_{\textbf{v}}(v_{i})$ & The relative order of value $v_{i}$ in a group of values \textbf{v}\\
		$P(\textbf{p})$ & An order-preserving pattern\\
		\textit{sup}(\textbf{p}, \textbf{t})& Support of pattern \textbf{p} in \textbf{t}\\
		\textit{minden}& A density threshold\\
		\textit{den}(\textbf{p}, \textbf{t})& The density of pattern \textbf{p} in \textbf{t}\\
		\textit{r}(\textbf{p}, \textit{D})& The support rate of pattern \textbf{p} in \textit{D}\\
		\textit{c}(\textbf{p}, \textit{D})& The contrast rate of pattern \textbf{p} in \textit{D}\\
		\noalign{\smallskip}\hline
	\end{tabular}
\end{table}

\section{Algorithm design}

COPP-Miner has three parts: extreme point extraction (EPE), forward mining, and reverse mining. Section \ref{sub4.1} proposes the EPE algorithm which aims to reduce the length of the time series and compress the size of dataset, so as to improve the mining efficiency of  COPP-Miner. The forward mining discovers the top contrast patterns which are most frequent in the positive class and are infrequent in the negative class, and the reverse mining continues to discover the top contrast patterns which are most frequent in the negative class and are infrequent in the positive class. Section \ref{sub4.2} presents the forward mining which contains three steps: group pattern fusion strategy to generate candidate patterns,  support rate calculation (SRC) method to efficiently calculate the support of a pattern, and pruning strategies to further prune candidate patterns. Section \ref{sub4.3} shows the reverse mining that is the same as forward mining, except for using one pruning strategy.  Finally, we illustrate COPP-Miner in Section \ref{sub4.4}. The framework of COPP-Miner is shown in Fig. \ref{framework}.


\begin{figure}[!htb]
	\centering
	\includegraphics[width=\linewidth]{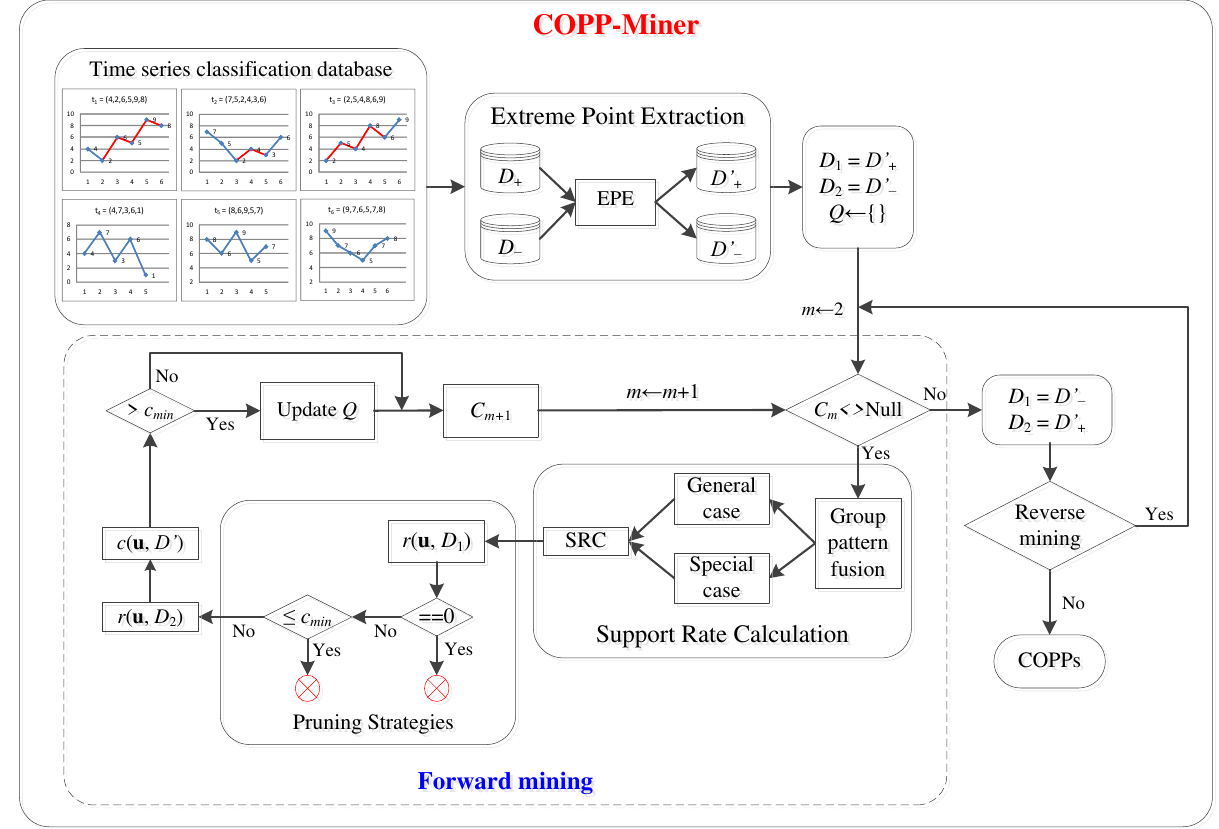}
	\captionsetup{font=footnotesize} 
	\caption{Framework of COPP-Miner which contains three parts: extreme point extraction (EPE) to extract the local extreme points, forward mining, and reverse mining to discover COPPs.  EPE aims to extract local extreme points of time series. Forward mining discovers the patterns which are frequent in $D'_+$ and infrequent in $D'_-$, and contains three steps: group pattern fusion strategy to generate candidate patterns, support calculation method to efficiently calculate the pattern support, and two pruning strategies to further prune candidate patterns. Reverse mining further mines contrast patterns that are frequent in $D'_-$ and infrequent in $D'_+$. Reverse mining is the same as forward mining, except for only using one pruning strategy. }
	\label{framework}		
\end{figure}

\subsection{Extreme Point Extraction }	\label {sub4.1}

The local extreme point method can extract the effective features of a time series, compress the data, and improve the accuracy and efficiency of the algorithm. Hence, we employ local extreme points to determine the natural segmentation points of time series trends.  The extreme points include local minimal points and local maximal points.

\begin{definition}\label{definition7}
	Suppose that we have a time series \textbf{t} with length \textit{n}. If $t_{i}$ satisfies one of the following four conditions, then $t_{i}$ is stored as a local extreme point {\cite{abdelmadjid2021afas}}:
	\begin{itemize}
		\item[•] If \textit{i} = 1, then $t_{i}$ is the first element;
		\item[•] If \textit{i} = \textit{n}, then $t_{i}$ is the last element;
		\item[•] If  $t_i < t_{i-1}$ and $t_i \leqslant t_{i+1}$ or $t_i \leqslant t_{i-1}$ and $t_i < t_{i+1}$, then $t_{i}$ is a local minimal point;
		\item[•] If  $t_{i-1} < t_i$ and $t_{i+1} \leqslant t_i$ or $t_{i-1} \leqslant t_i$ and $t_{i+1} < t_i$, then $t_{i}$ is a local maximal point.
	\end{itemize}
\end{definition}

Based on Definition \ref{definition7}, we use the extreme point extraction (EPE) algorithm to process the original time series and extract the local extreme points to form a new time series. The pseudo-code of the EPE algorithm is omitted.	


\begin{example}\label{example4}
	In this example, we use $\textbf{t}_{6}$ = (9,7,6,5,7,8) in Table \ref{tab1} to illustrate the principle of the EPE algorithm. According to Definition \ref{definition7}, 9 is stored, since it is the first element. 7 is not stored, since it is not a local extreme point. Similarly, 6 is not stored. However, 5 is a local extreme point and is stored, since 5 $<$ 6 and 5 $\leqslant$ 7. Similarly, 7 is not stored. Finally, 8 is stored, since it is the last element. Hence, the new time series is $\textbf{t'}_6$ = (9,5,8).
\end{example}

In Example \ref{example4}, the length of the original time series $\textbf{t}_6$ is 6. After using EPE, the length of the new time series is 3. Hence, EPE can effectively compress a time series.	

\subsection{Forward mining } \label {sub4.2}

According to Definition \ref{definition6}, we know that $c(\textbf{p}, D) = |r(\textbf{p}, D_+)-r(\textbf{p}, D_-)|$, which makes it difficult to mine the top-\textit{k} COPPs. Thus, we mine the top-\textit{k} COPPs whose $c(\textbf{p}, D)$ is $r(\textbf{p}, D_+)-r(\textbf{p}, D_-)$, named forward mining, and we mine the top-\textit{k} COPPs whose $c(\textbf{p}, D)$ is $r(\textbf{p}, D_-)-r(\textbf{p}, D_+)$, named reverse mining. The forward mining has three steps: group pattern fusion strategy to generate candidate patterns, support calculation method to efficiently calculate the pattern support, and pruning strategies to further prune candidate patterns.	

\subsubsection{Candidate pattern generation }
\label {sub4.2.1}

To effectively reduce the number of candidate patterns, a pattern fusion strategy was proposed in \cite{wu2022oppm} which outperforms the enumeration strategy. However, we note that after extracting extreme points, the pattern fusion strategy still contains many unnecessary calculations. To improve the efficiency, we further propose a group pattern fusion strategy. Now, we first introduce the relevant definitions of the pattern fusion strategy. 

\begin{definition}\label{definition8}
	Suppose that we have a list of values $\textbf{p} = (p_{1},p_{2},\ldots,p_{m})$. $\textbf{r} = P((p_{1}, p_{2}, …, p_{m-1}))$ is called the prefix OPP of \textbf{p}, denoted as \textbf{r} = \textit{prefixop}(\textbf{p}), and $\textbf{f} = P((p_2, p_3, …, p_m))$ is called the suffix OPP of \textbf{p}, denoted as \textbf{f} = \textit{suffixop}(\textbf{p}).
\end{definition}	

\begin{definition}\label{definition9}
	Suppose that we have two patterns \textbf{p} = $(p_{1},p_{2},\ldots,p_{m})$ and \textbf{q} = $(q_{1},q_{2},\ldots,q_{m})$. If \textit{suffixop}(\textbf{p}) = \textit{prefixop}(\textbf{q}), then \textbf{p} and \textbf{q} can generate a super-pattern with length \textit{m} + 1. By comparing the values of $p_1$ and $q_m$, this fusion can be divided into a general case and a special case.
\end{definition}

\textbf{General case}: If $p_{1} \neq q_{m}$, then \textbf{p} and \textbf{q} can generate  a pattern with a length of \textit{m} + 1, \textbf{u} = $(u_{1},u_{2},\ldots,u_{m+1})$, denoted as \textbf{u} = \textbf{p} $\oplus$ \textbf{q}. $u_i$ can be calculated as follows:

\textcircled{1} If $p_{1} < q_{m}$, then $u_{1} = p_{1}$ and $u_{m+1} = q_{m}+1$. Moreover, if $p_{i} < u_{m+1}$, then $u_{i} = p_{i}$, otherwise, $u_{i} = p_{i}+1 (2\leqslant i \leqslant m)$.

\textcircled{2} If $p_{1} > q_{m}$, then $u_{1} = p_{1}+1$ and $u_{m+1} = q_{m}$. Moreover, if $q_{j} < u_{1}$, then $u_{j+1} = q_{j}$, otherwise, $u_{j+1} = q_{j}+1 (1\leqslant j \leqslant m-1)$.

\textbf{Special case}: If $p_{1} = q_{m}$, then \textbf{p} and \textbf{q} can generate two patterns with length of \textit{m} + 1, \textbf{u} = $(u_{1},u_{2},\ldots,u_{m+1})$ and \textbf{v} = $(v_{1},v_{2},\ldots,v_{m+1})$ , denoted as \textbf{u}, \textbf{v} = \textbf{p} $\oplus$ \textbf{q} where $u_{1} = v_{m+1} = p_{1}$ and $u_{m+1} = v_{1} = p_{1}+1$, and if $p_{i} < p_{1}$, then $u_{i} = v_{i} = p_{i}$, otherwise, $u_{i} = v_{i} = p_{i}+1 (2\leqslant i \leqslant m)$.

\begin{example}\label{example5}
	Suppose that we have two patterns \textbf{p} = (2,1,3) and \textbf{q} = (1,3,2), and the pattern fusion strategy is adopted to generate candidate pattern set $C_4$. 
	
	If \textbf{p} is used as a prefix pattern, \textbf{q} as a suffix pattern, since \textit{suffixop}(\textbf{p}) = \textit{prefixop}(\textbf{q}) = (1,2), then pattern \textbf{p} can fuse with pattern \textbf{q} to generate candidate patterns shown as follows:
	
	\begin{enumerate}
		\item[(i)] \textbf{p} $\oplus$ \textbf{q}. It belongs to the special case, since \textit{suffixop}(\textbf{p}) = \textit{prefixop}(\textbf{q}) = (1,2), and $p_1$ = 2 is equal to $q_3$ = 2. Hence, we generate two candidate patterns: $u_1$ = $v_4$ = 2, $u_4$ = $v_1$ = 2 + 1 = 3, $p_2 < p_1$, $u_2$ = $v_2$ = 1, $p_3 > p_1$, and $u_3$ = $v_3$ = 3 + 1 = 4. Thus, \textbf{u} = (2,1,4,3) and \textbf{v} = (3,1,4,2).			
	\end{enumerate}	
	
	If \textbf{q} is used as a prefix pattern, \textbf{p} as a suffix pattern, since \textit{suffixop}(\textbf{q}) = \textit{prefixop}(\textbf{p}) = (2,1), then pattern \textbf{q} can fuse with pattern \textbf{p} to generate candidate patterns shown as follows:
	
	\begin{enumerate}
		\item[(ii)] \textbf{q} $\oplus$ \textbf{p}. It belongs to the general case, since \textit{suffixop}(\textbf{q}) = \textit{prefixop}(\textbf{p}) and $q_1$ = 1 is less than $p_3$ = 3. Hence, $u_1$ = 1, $u_4$ = 3 + 1 = 4, $q_2 < u_4$, $u_2$ = $q_2$ = 3, $q_3 < u_4$, and $u_3$ = $q_3$ = 2. Thus, \textbf{u} = (1,3,2,4).
	\end{enumerate}	
	
\end{example}


However, the pattern fusion strategy still produces many unnecessary calculations. To reduce redundant calculations, this paper proposes a group pattern fusion strategy. The principle is as follows.

{All frequent patterns and candidate patterns are  divided into two parts: Group-1 and Group-2. Patterns  (1,2) and (2,1) are stored in Group-1 and Group-2, respectively. There are three cases for \textbf{p} $\oplus$ \textbf{q}. }

{Case 1:  Patterns \textbf{p} and \textbf{q} are in Group-1 and Group-2, respectively. In this case, the generated candidate patterns are stored in Group-1.}

{Case 2:  Patterns \textbf{p} and \textbf{q} are in Group-2 and Group-1, respectively. In this case, the generated candidate patterns are stored in Group-2.}

{Case 3:  Patterns \textbf{p} and \textbf{q} are in Group-1 or Group-2. In this case, the generated candidate patterns can be pruned.}

\begin{theorem}\label {theorem group}
	{Suppose that patterns  \textbf{p} and  \textbf{q} are in the same group. In this case, pattern \textbf{p} does not need to fuse with pattern  \textbf{q}, which means that the group pattern fusion strategy is complete. }
\end{theorem}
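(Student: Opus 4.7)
The plan is to show that every candidate produced by a Case-3 fusion is a pattern that cannot occur in any extracted time series, so pruning such candidates loses no top-\textit{k} COPP. The key ingredient is the alternation property guaranteed by EPE (Definition~\ref{definition7}): in the reduced time series $\mathbf{t}'$, any three consecutive values $t'_i, t'_{i+1}, t'_{i+2}$ are non-monotone, because the interior point $t'_{i+1}$ is a local extremum, and between two consecutive stored points the original values are strictly monotone with direction flipped at each local min/max. Concretely, if $t'_{i+1}$ is a local minimum then $t'_i > t'_{i+1} < t'_{i+2}$, and if it is a local maximum then $t'_i < t'_{i+1} > t'_{i+2}$. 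In particular, no length-$3$ sub-time series of $\mathbf{t}'$ has OPP $(1,2,3)$ or $(3,2,1)$.

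Next I would characterize the two groups inductively: every pattern stored in Group-1 satisfies $P(p_1,p_2)=(1,2)$, and every pattern stored in Group-2 satisfies $P(p_1,p_2)=(2,1)$. The base case is the seeds $(1,2)$ and $(2,1)$. For the inductive step, the fusion formulas in Definition~\ref{definition9} (both the general and special cases) leave the relative order of $p_1$ and $p_2$ unchanged in $u_1$ and $u_2$, so a Case-1 fusion with $\mathbf{p}\in$~Group-1 yields $u_1 < u_2$ and is correctly placed in Group-1, and symmetrically for Case-2.

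Combining the two facts, suppose $\mathbf{p}$ and $\mathbf{q}$ are both in Group-1 and satisfy $\textit{suffixop}(\mathbf{p}) = \textit{prefixop}(\mathbf{q})$. Matching the first two positions of this common subpattern gives $P(p_2, p_3) = P(q_1, q_2) = (1,2)$, hence $p_2 < p_3$; together with $p_1 < p_2$ from $\mathbf{p}\in$~Group-1 we get $p_1 < p_2 < p_3$, so the fused pattern $\mathbf{u}$ inherits $u_1 < u_2 < u_3$, i.e., $P(u_1,u_2,u_3)=(1,2,3)$. By the alternation property, $\mathbf{u}$ has support zero in every extracted series, so $r(\mathbf{u}, D'_+) = r(\mathbf{u}, D'_-) = 0$, its contrast rate is zero, and it cannot enter the top-\textit{k} COPPs. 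The symmetric argument with both patterns in Group-2 forces $P(u_1,u_2,u_3)=(3,2,1)$, again impossible after EPE, so Case-3 candidates are always superfluous and discarding them preserves completeness.

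The hard part will be the routine but careful endpoint bookkeeping: verifying the alternation when $t'_i$ is the first stored element or $t'_{i+2}$ is the last, and double-checking in both the general and special branches of Definition~\ref{definition9} that the fusion truly preserves the strict order of $p_1, p_2, p_3$ inside $u_1, u_2, u_3$ rather than merely their $p_1$--$p_2$ order. Neither obstacle is substantive, but both have to be handled cleanly before the short core argument above can be invoked.
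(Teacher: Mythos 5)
Your proposal is correct and follows essentially the same route as the paper's proof: characterize each group by the relative order of its first two values, observe that fusing two same-group patterns forces the first three values of the result to be monotone, i.e.\ to have OPP $(1,2,3)$ or $(3,2,1)$, and conclude from the EPE alternation property that such a pattern has no occurrence in the shrunk series. Your version merely makes explicit two steps the paper asserts without detail (the inductive group invariant and the passage from zero support to zero contrast rate), which is a welcome tightening rather than a different argument.
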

\begin{proof}
	{ 
		It is easy to know that the relative order of the first two values of each pattern in Group-1 is (1,2), and that in Group-2 is (2,1). If patterns  \textbf{p} and  \textbf{q} are in Group-1, then the relative orders of the first two values of \textbf{p} and  \textbf{q} are both (1,2). Therefore, the relative order of the first three values of \textbf{u} is (1,2,3), where  \textbf{u} = \textbf{p} $\oplus$ \textbf{q}.
		Similarly, if  patterns  \textbf{p} and  \textbf{q} are in Group-2, then the relative orders of the first three values of \textbf{u} is (3,2,1). However, according to EPE, the shrunk time series does not have the sub-time series whose relative order is (1,2,3) or (3,2,1). Thus, pattern \textbf{u} can be pruned, which means that if patterns  \textbf{p} and  \textbf{q} are in the same group, then pattern \textbf{p} does not need to fuse with pattern  \textbf{q}. Hence, the group pattern fusion strategy is complete. }
	\end {proof}
	
	{The advantage of the group pattern fusion strategy is as follows. Suppose that there are $l$ candidate patterns with length $m$, and each group has about $l/2$ candidate patterns. If we adopt the pattern fusion strategy to generate candidate patterns, the method will check $l \times l$ times, since each pattern can fuse with all patterns. However, if we adopt the group pattern fusion strategy to generate candidate patterns, the method will fuse $l \times l/2$ times, since each pattern in Group-1 can fuse with all patterns in Group-2, and each pattern in Group-2 can fuse with all patterns in Group-1. Thus, each process fuses $l \times l/4$ times. Hence, the group pattern fusion strategy can reduce the times of pattern fusion. }

	\subsubsection{Support rate calculation}
	\label {sub4.2.2}
	
	To avoid redundant support calculations, we propose a support rate calculation (SRC) algorithm, which calculates the support rate of super-patterns based on the matching results of sub-patterns. The principle of this algorithm is illustrated as follows.
	
	Suppose that we know all the occurrences of each pattern with a length of \textit{m} (\textit{m} $\geqslant$ 2), and all occurrences of pattern \textbf{p} in a time series \textbf{t} and in a time series dataset \textit{D} are stored in sets $L_{\textbf{p}, \textbf{t}}$ and $L_{\textbf{p}, \textit{D}}$, respectively. Thus, the size of $L_{\textbf{p}, \textbf{t}}$ is the support \textit{sup}(\textbf{p}, \textbf{t}), i.e., \textit{sup}(\textbf{p}, \textbf{t}) = $|L_{\textbf{p}, \textbf{t}}|$. We use two auxiliary sets $\mathcal{P}_{\textbf{p}, \textbf{t}}$ and $\mathcal{S}_{\textbf{p}, \textbf{t}}$ to store the prefix set and the suffix set of pattern \textbf{p}, and we use them to calculate all occurrences of their super-patterns with a length of \textit{m} + 1. At the beginning of the procedure, $\mathcal{P}_{\textbf{p}, \textbf{t}}$ and $\mathcal{S}_{\textbf{p}, \textbf{t}}$ are the same as $L_{\textbf{p}, \textbf{t}}$. If we generate the candidate super-patterns using pattern fusion (\textbf{p} $\oplus$ \textbf{q}), then we dynamically delete the elements in $\mathcal{P}_{\textbf{p}, \textbf{t}}$ and $\mathcal{S}_{\textbf{p}, \textbf{t}}$. The details of this process follow.
	
	\textbf{Rule 1. General case.} According to Definition \ref{definition9}, if $p_{1}\neq q_{m}$, then \textbf{p} and \textbf{q} generate a pattern \textbf{u} = \textbf{p} $\oplus$ \textbf{q}. Moreover, if $l_{q_j} = l_{p_i}$ + 1, ($l_{p_i}$ $\in$ $\mathcal{P}_{\textbf{p}, \textbf{t}}$ and $l_{q_j}$ $\in$ $\mathcal{S}_{\textbf{p}, \textbf{t}}$), then $l_{q_j}$ is an occurrence of \textbf{u}, $l_{q_j}$ is added into $L_{\textbf{u}, \textbf{t}}$, and $l_{p_i}$ and $l_{q_j}$ are removed from $\mathcal{P}_{\textbf{p}, \textbf{t}}$ and $\mathcal{S}_{\textbf{p}, \textbf{t}}$, respectively.
	
	\textbf{Rule 2. Special case.} According to Definition \ref{definition9}, if $p_1 = q_m$, then \textbf{p} and \textbf{q} generate two patterns \textbf{u}, \textbf{v} = \textbf{p} $\oplus$ \textbf{q}. Moreover, if $l_{q_j} = l_{p_i}$ + 1, ($l_{p_i}$ $\in$ $\mathcal{P}_{\textbf{p}, \textbf{t}}$ and $l_{q_j}$ $\in$ $\mathcal{S}_{\textbf{p}, \textbf{t}}$), then $l_{q_j}$ may be an occurrence of \textbf{u} or \textbf{v}. Now, we compare the values of $t_{first}$ and $t_{last}$ in time series \textbf{t}, where \textit{first} = $l_{q_j}$ - \textit{m} and \textit{last} = $l_{q_j}$. If $t_{first} < t_{last}$, then $l_{q_j}$ is an occurrence of \textbf{u}, i.e., $l_{q_j}$ is added into $L_{\textbf{u}, \textbf{t}}$, and $l_{p_i}$ and $l_{q_j}$ are removed from $\mathcal{P}_{\textbf{p}, \textbf{t}}$ and $\mathcal{S}_{\textbf{p}, \textbf{t}}$, respectively. If $t_{first} > t_{last}$, then $l_{q_j}$ is an occurrence of \textbf{v}, i.e., $l_{q_j}$ is added into $L_{\textbf{v}, \textbf{t}}$, and $l_{p_i}$ and $l_{q_j}$ are removed from $\mathcal{P}_{\textbf{p}, \textbf{t}}$ and $\mathcal{S}_{\textbf{p}, \textbf{t}}$, respectively. If $t_{first} = t_{last}$, then $l_{q_j}$ is not an occurrence of \textbf{u} or \textbf{v}.
	
	Example \ref{example6} illustrates the principle of Rules 1 and 2.
	
	\begin{example}\label{example6}
		Take $\textbf{t}_1$ = (4,2,6,5,9,8) in Table \ref{tab1} as an example. Suppose that we have two patterns \textbf{p} = (2,1,3) and \textbf{q} = (1,3,2). It is easy to see that $L_{\textbf{p}, \textbf{t}_1}$ = \{3,5\} and $L_{\textbf{q}, \textbf{t}_1}$ = \{4,6\}. Now, we calculate the supports of the super-patterns \textbf{p} $\oplus$ \textbf{q}.

		According to Definition \ref{definition9}, we know that \textbf{p} $\oplus$ \textbf{q} belongs to the special case. Thus, we generate two super-patterns, \textbf{u} = (2, 1, 4, 3) and \textbf{v} = (3, 1, 4, 2). At the beginning of the process, $\mathcal{P}_{\textbf{p}, \textbf{t}_1}$ = $L_{\textbf{p}, \textbf{t}_1}$ and $\mathcal{S}_{\textbf{q}, \textbf{t}_1}$ = $L_{\textbf{q}, \textbf{t}_1}$, since \textbf{p} and \textbf{q} are used as prefix and suffix patterns, respectively. We know that 3 $\in$  $\mathcal{P}_{\textbf{p}, \textbf{t}_1}$ and 3 + 1 = 4 $\in$ $\mathcal{S}_{\textbf{q}, \textbf{t}_1}$. Now, we determine whether 4 is an occurrence of \textbf{u} or \textbf{v} using Rule 2. We know that the length of pattern \textbf{p} is 3. Thus, \textit{first} = 4 - 3 = 1, and \textit{last} = 4. Since $t_1$ = 4 $< t_4$ = 5, 4 is an occurrence of \textbf{u}, i.e., 4 is added into $L_{\textbf{u}, t_1}$, and 3 and 4 are removed from $\mathcal{P}_{\textbf{p}, \textbf{t}_1}$ and $\mathcal{S}_{\textbf{q}, \textbf{t}_1}$, respectively. Similarly, we know that $L_{\textbf{u}, t_1}$ = $\{$4, 6$\}$ and $L_{\textbf{v}, t_1}$ = $\phi$, i.e., $sup(\textbf{u}, \textbf{t}_1)$ = 2 and $sup(\textbf{v}, \textbf{t}_1)$ = 0. Moreover, $\mathcal{P}_{\textbf{p}, \textbf{t}_1} = \phi$ and $\mathcal{S}_{\textbf{q}, \textbf{t}_1} = \phi$.	
	\end{example}
	
	The advantage of dynamically deleting the elements in $\mathcal{P}_{\textbf{p}, \textbf{t}_1}$ and $\mathcal{S}_{\textbf{q}, \textbf{t}_1}$ follows. The sizes of $\mathcal{P}_{\textbf{p}, \textbf{t}_1}$ and $\mathcal{S}_{\textbf{q}, \textbf{t}_1}$ decrease gradually, since the elements in $\mathcal{P}_{\textbf{p}, \textbf{t}_1}$ and $\mathcal{S}_{\textbf{q}, \textbf{t}_1}$ are being dynamically deleted. If we continue to use $\mathcal{P}_{\textbf{p}, \textbf{t}_1}$ or $\mathcal{S}_{\textbf{q}, \textbf{t}_1}$ to calculate the supports of their super-patterns, the efficiency of the algorithm will be improved. Example \ref{example7} illustrates the advantage of this method.
	
	\begin{example}\label{example7}
		Suppose that we have two patterns \textbf{p} = (2,1,3) and \textbf{q} = (2,3,1). In this example, we use the results of Example \ref{example6} to calculate the support of the super-pattern \textbf{p} $\oplus$ \textbf{q}.

		According to Definition \ref{definition9}, we know that \textbf{p} $\oplus$ \textbf{q} belongs to the general case and generates one super-pattern \textbf{w} = (3,2,4,1). According to Example \ref{example6}, we know that $\mathcal{P}_{\textbf{p}, \textbf{t}_1} = \phi$, which means that there is no occurrence $l_{p_i}$ in $\mathcal{P}_{\textbf{p}, \textbf{t}_1}$. Therefore, there is no occurrence for \textbf{p} $\oplus$ \textbf{q}. Hence, $L_{\textbf{w}, \textbf{t}_1} = \phi$, i.e., \textit{sup}(\textbf{w}, $\textbf{t}_1$) = 0 can be obtained directly. Thus, this example shows that this method can effectively improve the efficiency of the support rate calculation.
		
	\end{example}
	
	The support of a pattern \textbf{p} in a time series \textbf{t}, i.e., \textit{sup}(\textbf{p}, \textbf{t}), can be calculated using the above method. We can further calculate \textit{den}(\textbf{p}, \textbf{t}), \textit{C}(\textbf{p}, \textbf{t}), and \textit{r}(\textbf{p}, \textit{D}) according to Definition \ref{definition5}. The SRC algorithm is shown in Algorithm \ref{Algorithm SRC}.

	\begin{algorithm}[htb]\label{Algorithm SRC}
		\caption{SRC: Support rate calculation in time series dataset \textit{D}}	
		\hspace*{0.02in} \leftline{{\bf Input:}
			Pattern \textbf{p} and its prefix set $\mathcal{P}_{\textbf{p}, \textit{D}}$, pattern \textbf{q} and its suffix set $\mathcal{S}_{\textbf{q}, \textit{D}}$, and a density threshold \textit{minden} }
		\hspace*{0.02in} \leftline{{\bf Output:}
			\textit{r}(\textbf{u}, \textit{D}) and its $L_{\textbf{u}, \textit{D}}$, \textit{r}(\textbf{v}, \textit{D}) and its $L_{\textbf{v}, \textit{D}}$, $\mathcal{P}_{\textbf{p}, \textit{D}}$ and $\mathcal{S}_{\textbf{q}, \textit{D}}$ }
		\begin{algorithmic}[1]
			\State \textbf{f} $\leftarrow$ \textit{suffixop}(\textbf{p});  
			\State \textbf{r} $\leftarrow$ \textit{prefixop}(\textbf{q});
			\If {\textbf{f} == \textbf{r} }	
			\If {\textbf{p}[0] == \textbf{q}[\textit{m}-1] }
			\State \textbf{u} $\cup$ \textbf{v}  $\leftarrow$ \textbf{p} $\oplus$ \textbf{q}; 
			\For {each \textbf{t} in \textit{D}}
			\State Obtain the matching set $L_{\textbf{u}, \textbf{t}}$ and $L_{\textbf{v}, \textbf{t}}$, and update $\mathcal{P}_{\textbf{p}, \textbf{t}}$ and $\mathcal{S}_{\textbf{q}, \textbf{t}}$ according to Rule 1;
			\State Update \textit{r}(\textbf{u}, \textit{D}) and \textit{r}(\textbf{v}, \textit{D}) according to Definition \ref{definition5};
			\EndFor
			\Else
			\State \textbf{u} $\leftarrow$ \textbf{p}$\oplus$\textbf{q}; 
			\For {each \textbf{t} in \textit{D}}
			\State Obtain the matching set $L_{\textbf{u}, \textbf{t}}$, and update $\mathcal{P}_{\textbf{p}, \textbf{t}}$ and $\mathcal{S}_{\textbf{q}, \textbf{t}}$ according to Rule 2;
			\State Update \textit{r}(\textbf{u}, \textit{D}) according to Definition \ref{definition5};
			\EndFor
			\EndIf
			\EndIf
			\State \Return \textit{r}(\textbf{u}, \textit{D}) and its $L_{\textbf{u}, \textit{D}}$, \textit{r}(\textbf{v}, \textit{D}) and its $L_{\textbf{v}, \textit{D}}$, $\mathcal{P}_{\textbf{p}, \textit{D}}$ and $\mathcal{S}_{\textbf{q}, \textit{D}}$;
		\end{algorithmic}
	\end{algorithm}

	\subsubsection{Pruning strategies}
	\label {sub4.2.3}
	
	We know that if the mining problem satisfies anti-monotonicity, then we can adopt the Apriori strategy to prune the candidate patterns. Otherwise, we have to design other strategies to prune the candidate patterns. It is easy to know that COPP mining does not satisfy anti-monotonicity. Thus, we propose two strategies to effectively prune the candidate patterns.

	
	

	Since OPP mining in single-class time series satisfies anti-monotonicity \cite {wu2022oppm, wu2022oprm}, it follows that $r(\textbf{p}, D_+)$ satisfies anti-monotonicity. Based on these characteristics, we design two efficient pruning strategies to prune candidate patterns.
	
	\textbf{Pruning Strategy 1.} If $r(\textbf{p}, D_+) = 0$, then pattern \textbf{p} and its super-pattern \textbf{u} will be pruned.
	
	\textbf{Pruning Strategy 2.} Suppose that there are \textit{k} COPPs and the minimal contrast rate  $c_{min} > 0$. If $r(\textbf{p}, D_+) \leqslant c_{min}$, then pattern \textbf{p} and its super-pattern \textbf{u} will be pruned.
	
	{According to Pruning Strategy 2, we know that the greater the $c_{min}$, the more candidate patterns can be pruned. The greater the support rate of a pattern, the greater the possibility of its contrast ratio. Therefore, based on the above support calculation method, we propose a heuristic strategy.}
	
	{\textbf{Support Maximum-first strategy.} According to the group pattern fusion strategy, we know all frequent patterns are divided into two parts. These patterns in each group are in descending order according to the support rate in $D_+$.}
	
	
	To prove the correctness of this support maximum-first strategy, we initially prove the correctness of these pruning strategies in Theorems \ref{theorem1} and \ref{theorem2}.
	
	\begin{theorem}\label{theorem1}
		If $r(\textbf{p}, D_+)$ = 0, then \textbf{p} and its super-pattern \textbf{u} will be pruned.
	\end{theorem}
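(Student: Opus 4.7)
The plan is to reduce the theorem to two facts: (i) the contrast rate of \textbf{p} (in the forward-mining sense) is non-positive whenever $r(\textbf{p}, D_+)=0$, and (ii) the map $\textbf{p} \mapsto r(\textbf{p}, D_+)$ is anti-monotone, so the same conclusion transfers to every super-pattern \textbf{u} of \textbf{p}. Together these imply that neither \textbf{p} nor \textbf{u} can displace a pattern whose forward contrast is strictly positive, so they are safe to prune.

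First I would unwind the definitions. For forward mining we measure $c(\textbf{p},D) = r(\textbf{p}, D_+) - r(\textbf{p}, D_-)$. Since $r(\cdot, D_-) \geq 0$ by construction, $r(\textbf{p}, D_+)=0$ forces $c(\textbf{p}, D) \leq 0$. Any pattern \textbf{q} with $r(\textbf{q}, D_+) > r(\textbf{q}, D_-)$ has strictly larger forward contrast than \textbf{p}; in particular, the pattern (1,3,2) from Example \ref{example3} already witnesses $c = 1 > 0$, so no forward top-$k$ list need include \textbf{p}.

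Next I would propagate this to every super-pattern \textbf{u} via anti-monotonicity. For each time series $\textbf{t} \in D_+$, every occurrence of \textbf{u} induces an occurrence of its sub-pattern \textbf{p} at the same ending position, so $\textit{sup}(\textbf{u}, \textbf{t}) \leq \textit{sup}(\textbf{p}, \textbf{t})$, hence $\textit{den}(\textbf{u}, \textbf{t}) \leq \textit{den}(\textbf{p}, \textbf{t})$, and therefore $C(\textbf{u}, \textbf{t}) \leq C(\textbf{p}, \textbf{t})$. Summing over $\textbf{t} \in D_+$ and dividing by $|D_+|$ gives $r(\textbf{u}, D_+) \leq r(\textbf{p}, D_+) = 0$, so $r(\textbf{u}, D_+) = 0$. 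Applying the first step to \textbf{u} in place of \textbf{p} shows $c(\textbf{u}, D) \leq 0$ as well, which justifies pruning the entire super-pattern subtree rooted at \textbf{p}.

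The only delicate point I anticipate is the step from $\textit{sup}$-anti-monotonicity to $r$-anti-monotonicity: one must check that the thresholding operation $C(\cdot, \textbf{t}) = \mathbf{1}[\textit{den}(\cdot, \textbf{t}) > \textit{minden}]$ preserves the inequality, which it clearly does since $\mathbf{1}[\cdot > \textit{minden}]$ is a monotone indicator. This is really the crux of the argument; once stated, the theorem follows in one line. A peripheral caveat worth mentioning is the edge case in which fewer than $k$ patterns achieve strictly positive forward contrast. Even then, the pruned patterns all have $c \leq 0$, so replacing them by any other zero-contrast pattern (in particular, by patterns already visited before pruning takes effect) does not change the contents of the final top-$k$ up to ties, which I would note explicitly at the end of the proof.
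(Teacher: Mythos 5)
Your proposal is correct and follows essentially the same route as the paper: show $c(\textbf{p},D)=r(\textbf{p},D_+)-r(\textbf{p},D_-)\leqslant 0$ so \textbf{p} cannot be a top-$k$ COPP, then use anti-monotonicity of $r(\cdot,D_+)$ to force $r(\textbf{u},D_+)=0$ and repeat the argument for every super-pattern. The only difference is that you derive the anti-monotonicity of $r(\cdot,D_+)$ from $\textit{sup}$-anti-monotonicity and the monotone thresholding $C(\cdot,\textbf{t})$, whereas the paper simply cites it from prior work; that extra detail is sound (modulo the minor point that an occurrence of \textbf{u} induces an occurrence of a prefix sub-pattern ending one position earlier, not at the same ending position, which does not affect the inequality).
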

	
	\begin{proof}
		According to Definition \ref{definition5}, 0 $\leqslant r(\textbf{p}, D_+) \leqslant$ 1 and 0 $\leqslant r(\textbf{p}, D_-) \leqslant$ 1. Since $r(\textbf{p}, D_+)$ = 0, we know that $c(\textbf{p}, D) = r(\textbf{p}, D_+) - r(\textbf{p}, D_-)$ = 0 $ - r(\textbf{p}, D_-) \leqslant$ 0. If \textbf{p} is a top-\textit{k} COPP, then $c(\textbf{p}, D) >$ 0. Therefore, pattern \textbf{p} cannot be a COPP. Assume that pattern \textbf{u} is a super-pattern of \textbf{p}. We know that $r(\textbf{u}, D_+)$ = 0 since $r(\textbf{p}, D_+)$ satisfies anti-monotonicity and $r(\textbf{p}, D_+)$ = 0. Similarly, the super-pattern \textbf{u} cannot be a top-\textit{k} COPP. Hence, \textbf{p} and its super-pattern \textbf{u} can be pruned if $r(\textbf{p}, D_+)$ = 0. 
	\end{proof}
	
	\begin{theorem}\label{theorem2}
		If $r(\textbf{p}, D_+) \leqslant c_{min}$, then pattern \textbf{p} and its super-pattern \textbf{u} will be pruned.
	\end{theorem}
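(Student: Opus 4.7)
The plan is to mirror the argument used for Theorem \ref{theorem1}, but with the sharper numerical bound supplied by $c_{min}$ rather than by zero. The key observation I would rely on is that $r(\textbf{p}, D_+)$ is itself an upper bound on $c(\textbf{p}, D)$, because $r(\textbf{p}, D_-) \geqslant 0$ by Definition \ref{definition4}. Once this upper bound is in hand, the hypothesis $r(\textbf{p}, D_+) \leqslant c_{min}$ transfers directly to a bound on the contrast rate, and the remainder of the argument is a replay of the anti-monotonicity step used previously.

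Concretely, I would carry out two steps. First, handle pattern \textbf{p} itself: from Definition \ref{definition5} and the non-negativity of $r(\textbf{p}, D_-)$, write
\[
c(\textbf{p}, D) \;=\; r(\textbf{p}, D_+) - r(\textbf{p}, D_-) \;\leqslant\; r(\textbf{p}, D_+) \;\leqslant\; c_{min}.
\]
Because $c_{min}$ is by definition the smallest contrast rate among the \textit{k} patterns currently retained, and $c_{min} > 0$, \textbf{p} cannot strictly improve the top-\textit{k} list and may therefore be discarded. Second, lift the bound to an arbitrary super-pattern \textbf{u} of \textbf{p}: since forward mining is driven by $r(\cdot, D_+)$, which satisfies anti-monotonicity (as already invoked in the proof of Theorem \ref{theorem1} and justified via the OPP-mining results of \cite{wu2022oppm, wu2022oprm}), we have $r(\textbf{u}, D_+) \leqslant r(\textbf{p}, D_+) \leqslant c_{min}$. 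The same chain of inequalities then yields $c(\textbf{u}, D) \leqslant c_{min}$, so \textbf{u} is likewise excluded from the top-\textit{k} COPPs and may be pruned together with \textbf{p}.

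The only delicate point, and the step I expect to need the most care, is arguing that pruning is safe rather than merely non-improving: strictly speaking $c(\textbf{p}, D) \leqslant c_{min}$ does not rule out a tie with the current \textit{k}-th pattern, so I would either appeal to the usual convention that the already-retained top-\textit{k} list takes precedence over later candidates with equal score, or tighten the hypothesis in the statement to a strict inequality on \textbf{p} while still applying the non-strict anti-monotonic inequality to \textbf{u}. Either resolution is cosmetic and does not disturb the core inequality chain above, so the remaining calculations are routine.
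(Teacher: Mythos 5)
Your proposal is correct and follows essentially the same route as the paper's proof: bound $c(\textbf{p}, D)$ by $r(\textbf{p}, D_+)$ via the non-negativity of $r(\textbf{p}, D_-)$, then lift the bound to super-patterns through the anti-monotonicity of $r(\cdot, D_+)$. Your extra remark about the tie case ($c(\textbf{p}, D) = c_{min}$) is a legitimate refinement that the paper glosses over, but it does not change the substance of the argument.
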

	
	\begin{proof}
		We know that $c(\textbf{p}, D) = r(\textbf{p}, D_+) - r(\textbf{p}, D_-)$ is no greater than $c_{min}$, since 0 $\leqslant r(\textbf{p}, D_-) \leqslant$ 1 and $r(\textbf{p}, D_+) \leqslant c_{min}$. Hence, pattern \textbf{p} is not a top-\textit{k} COPP. Moreover, if \textbf{u} is a super-pattern of pattern \textbf{p}, then $r(\textbf{u}, D_+) \leqslant r(\textbf{p}, D_+)$ since $r(\textbf{p}, D_+)$ satisfies anti-monotonicity. Therefore, $c(\textbf{u}, D) \leqslant c_{min}$. Hence, super-pattern \textbf{u} is not a top-\textit{k} COPP, either.
	\end{proof}

	
	\begin{theorem}\label{theoremMaxFirst}
		The support maximum-first strategy is correct.
	\end{theorem}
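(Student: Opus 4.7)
The plan is to show that processing candidate patterns in descending order of $r(\textbf{p}, D_+)$ within each group neither prunes any true top-$k$ COPP nor alters the final output, while strictly strengthening the effect of Pruning Strategy 2. My proof would proceed in three short steps built directly on Theorems \ref{theorem1} and \ref{theorem2}.

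First, I would observe that the order in which candidates are explored affects only the running value of $c_{min}$ (the $k$-th largest contrast rate found so far), which is monotonically non-decreasing across the mining process. Therefore any pattern pruned under some intermediate threshold would still be prunable at any later, larger threshold, so reordering can only enlarge the set of prunable patterns; it cannot remove from the output a pattern that would otherwise be retained. This reduces the correctness question to whether maximum-first ordering can ever prune a genuine top-$k$ COPP.

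Second, I would rule out that possibility by bounding $c(\textbf{p}, D)$ from above. By Definition \ref{definition5}, $c(\textbf{p}, D) = r(\textbf{p}, D_+) - r(\textbf{p}, D_-) \leq r(\textbf{p}, D_+)$. If $\textbf{p}$ is a true top-$k$ COPP, its contrast rate is at least as large as the final $c_{min}$, which in turn dominates every intermediate $c_{min}$. Hence $r(\textbf{p}, D_+) \geq c(\textbf{p}, D) \geq c_{min}$ throughout, and Pruning Strategy 2 never fires on $\textbf{p}$, irrespective of the ordering. For a super-pattern $\textbf{u}$ that is itself a top-$k$ COPP, the anti-monotonicity of $r(\cdot, D_+)$ exploited in Theorem \ref{theorem2} implies that every ancestor prefix $\textbf{p}$ satisfies $r(\textbf{p}, D_+) \geq r(\textbf{u}, D_+) \geq c_{min}$, so $\textbf{u}$ is reached along the candidate-generation chain without being blocked.

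Finally, I would tie this back to the heuristic benefit: descending order drives $c_{min}$ upward as quickly as possible, and once a pattern in a group satisfies $r(\textbf{p}, D_+) \leq c_{min}$, every subsequent pattern in that group inherits the same inequality and can be skipped en masse. The main obstacle I anticipate is stating carefully that the dynamically updated $c_{min}$ never overshoots the final top-$k$ contrast threshold in a way that would hide a valid pattern; this requires noting that $c_{min}$ is always realised by some already discovered candidate, and therefore is bounded above by $c(\textbf{p}, D)$ for any true top-$k$ COPP $\textbf{p}$, so the key inequality $r(\textbf{p}, D_+) \geq c_{min}$ is preserved at every moment of the search.
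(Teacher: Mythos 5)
Your proposal is correct, but it takes a substantially different (and heavier) route than the paper. The paper's own proof of Theorem \ref{theoremMaxFirst} is a one-line observation: the support maximum-first strategy merely sorts the candidate patterns in each group by $r(\cdot, D_+)$ and prunes nothing by itself, so it cannot lose any pattern; correctness is immediate. You instead prove the genuinely non-trivial claim lurking behind that observation, namely that the reordering's \emph{indirect} effect --- driving the running threshold $c_{min}$ upward earlier, and hence making Pruning Strategy 2 fire more often --- still cannot eliminate a true top-$k$ COPP or any of its ancestor sub-patterns. Your argument (bound $c(\textbf{p}, D) \leqslant r(\textbf{p}, D_+)$, note that every intermediate $c_{min}$ is dominated by the final one, and propagate the bound to sub-patterns via anti-monotonicity) is essentially the paper's proof of Theorem \ref{theoremmax} (completeness), which the paper states separately and proves by contradiction. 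So you have effectively merged the correctness and completeness theorems into one argument; this buys a more self-contained and more convincing justification of the strategy, at the cost of re-proving what the paper factors out. One small point worth tightening: Pruning Strategy 2 fires on $r(\textbf{p}, D_+) \leqslant c_{min}$, while your invariant only gives $r(\textbf{p}, D_+) \geqslant c_{min}$, so a pattern whose contrast rate exactly equals the final $c_{min}$ and whose $r(\textbf{p}, D_-) = 0$ sits on the boundary; the paper's Theorem \ref{theoremmax} sidesteps this by assuming strict inequality $c(\textbf{t}, D) > c_{min}$, and your write-up should make the same convention explicit.
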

	
	\begin{proof}
		It is easy to know that the support maximum-first strategy is complete, since this strategy does not prune any patterns, but only adjusts the order of patterns. We have therefore proved the correctness of the support maximum-first strategy.
	\end{proof}

	\begin{theorem}\label{theoremmax}
		The support maximum-first strategy is complete.
	\end{theorem}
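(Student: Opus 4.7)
The plan is short and closely mirrors the reasoning already sketched for Theorem \ref{theoremMaxFirst}. First I would make explicit what the heuristic does and does not do: the support maximum-first strategy only reorders the candidates within Group-1 and Group-2 by descending $r(\textbf{p}, D_+)$, and never eliminates any candidate on its own. All actual pruning is performed by Pruning Strategies 1 and 2, whose correctness has already been established in Theorems \ref{theorem1} and \ref{theorem2}. Consequently, no candidate is removed for a reason that depends on the order of examination.

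Second, I would argue that reordering cannot cause any genuine top-\textit{k} COPP to be missed. Pruning Strategy 2 discards \textbf{p} only when $r(\textbf{p}, D_+) \le c_{min}$, and by Theorem \ref{theorem2} this already implies that neither \textbf{p} nor any of its super-patterns can lie in the top-\textit{k} COPPs. This conclusion depends only on the value of $r(\textbf{p}, D_+)$ and on the current $c_{min}$, not on the sequence in which the candidates are visited. Therefore, regardless of whether the support maximum-first order or any other order is used, every true top-\textit{k} COPP survives the pruning process and is ultimately reported.

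The one subtle point I would address explicitly is that the running value of $c_{min}$ does depend on the order in which patterns are processed, so a priori one could fear that a different ordering prunes a pattern that a different schedule would retain. I would dispel this by observing that $c_{min}$ is always bounded above by the final threshold $c_{min}^{\ast}$ of the true top-\textit{k} set, so any pattern \textbf{p} with $c(\textbf{p}, D) > c_{min}^{\ast}$ satisfies $r(\textbf{p}, D_+) > c_{min}$ at every intermediate stage (via the same anti-monotonicity argument used in Theorem \ref{theorem2}), and is therefore never pruned. This monotonicity observation is the only potentially tricky step; once it is in place, completeness of the support maximum-first strategy follows immediately from the completeness of Pruning Strategies 1 and 2, and the theorem is proved.
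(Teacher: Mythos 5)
Your argument is correct and is essentially the paper's own proof: both rest on the observation that a true top-\textit{k} COPP \textbf{t} has $r(\textbf{t}, D_+) \geqslant c(\textbf{t}, D) > c_{min}$, that anti-monotonicity of $r(\cdot, D_+)$ propagates this bound to every sub-pattern, and that Pruning Strategy 2 therefore never discards anything needed to generate \textbf{t} via pattern fusion (the paper phrases this as a proof by contradiction, you phrase it directly). Your explicit handling of the fact that the running $c_{min}$ is dynamic but always bounded above by the final threshold is a point the paper's proof silently assumes, so your write-up is, if anything, slightly more careful on the same route.
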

	\begin{proof}
		
		The proof is by contradiction. Assume that the support maximum-first strategy is incomplete which means that there is a pattern \textbf{t} whose contrast rate is greater than $c_{min}$, i.e., $c(\textbf{t}, D)$ $\textgreater$ $c_{min}$, but using the support maximum-first strategy, COPP-Miner cannot mine pattern \textbf{t}. Assume that \textbf{t} = \textbf{p} $\oplus$ \textbf{q}. We know that $r(\textbf{t}, D_+)$ $\textgreater$ $c_{min}$, since $r(\textbf{t}, D_-) \geqslant 0$, $c(\textbf{t}, D) = r(\textbf{t}, D_+) - r(\textbf{t}, D_-)$, and $c(\textbf{t}, D)$ $\textgreater$ $c_{min}$. We know that OPP mining satisfies anti-monotonicity {\cite{wu2022oppm,wu2022oprm}}. Thus, $r(\textbf{p}, D_+) \geqslant r(\textbf{t}, D_+)$ $\textgreater$ $c_{min}$ and $r(\textbf{q}, D_+) \geqslant r(\textbf{t}, D_+)$ $\textgreater$ $c_{min}$. Therefore, both patterns \textbf{p} and \textbf{q} cannot be pruned by Pruning Strategy 2. Using the group pattern fusion strategy, COPP-Miner can generate candidate pattern \textbf{t}, since \textbf{t} = \textbf{p} $\oplus$ \textbf{q}. Hence, COPP-Miner can discover pattern \textbf{t}, which contradicts the assumption that COPP-Miner cannot mine pattern \textbf{t}. Theorem \ref{theoremmax} is proved.

	\end{proof}
	
	\subsection{Reverse mining}
	\label {sub4.3}
	
	In this section, we exchange $D_+$ and $D_-$, and then we only use the condition of $r(\textbf{p}, D_-) \leqslant c_{min}$ and do not use the condition of $r(\textbf{p}, D_-)$ = 0 to prune candidate patterns.
	
	\begin{theorem}\label{theoremgreater}
		After forward mining, $c_{min}$ is greater than zero.
	\end{theorem}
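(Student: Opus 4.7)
The plan is to show $c_{min} > 0$ by tracking what forward mining actually keeps in the top-$k$ list. First I would recall that in forward mining, the contrast rate is taken without the absolute value, namely $c(\mathbf{p}, D) = r(\mathbf{p}, D_+) - r(\mathbf{p}, D_-)$, because the procedure is specifically hunting for patterns that are frequent in $D_+$ and infrequent in $D_-$. Thus $c_{min}$ denotes the minimum of this signed quantity over the $k$ patterns currently retained at the end of the forward pass.

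Second, I would argue that every pattern that ever enters the top-$k$ list must have strictly positive signed contrast. This follows from the design of the pruning machinery together with the ranking step. Pruning Strategy 1 rejects any pattern with $r(\mathbf{p}, D_+) = 0$, which is the only way a pattern of the form $0 - r(\mathbf{p}, D_-)$ could otherwise be admitted with $c \leq 0$ dominated by the $D_-$ term. Moreover, the top-$k$ list itself acts as a filter: a new candidate is inserted only if its contrast strictly exceeds that of the weakest current member. By initializing $c_{min} = 0$ at the start of forward mining, any pattern that later replaces the weakest entry must have $c > 0$, and once the list is full this lower bound is monotonically non-decreasing. Consequently, whenever the list is populated at all, its members carry positive signed contrast.

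Finally, I would close the argument by noting that after forward mining the list contains $k$ patterns each with $c > 0$, so $c_{min}$, being their minimum, is itself strictly positive, which is exactly the conclusion of the theorem. The main obstacle I anticipate is the implicit assumption that at least $k$ patterns with strictly positive signed contrast actually exist in the shrunk database; to address this I would either record it as a standing hypothesis on the input (reasonable for any meaningful binary classification task) or appeal to the structure produced by EPE, where the restriction to local extrema yields a rich enough pattern space that at least $k$ order-preserving patterns separate $D'_+$ from $D'_-$. Either route closes the argument while making precise the mild non-degeneracy that forward mining tacitly relies on.
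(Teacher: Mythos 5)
Your proof is correct and reaches the same conclusion as the paper, but by a genuinely different route. The paper argues by contradiction in three lines: if $c_{min}=0$ after forward mining, then fewer than $k$ contrast patterns were found, and (the paper asserts) forward mining cannot terminate in that case, contradiction. You instead argue directly by maintaining an invariant: $c_{min}$ is initialized to $0$, a candidate enters the top-$k$ list only if its signed contrast strictly exceeds the current $c_{min}$, and Pruning Strategy~1 already removes the patterns with $r(\mathbf{p},D_+)=0$ that could only contribute non-positive contrast; hence every member of a full list has $c>0$ and so does their minimum. Your version is the more careful one, because you explicitly name the non-degeneracy assumption both proofs secretly need --- that at least $k$ patterns with strictly positive signed contrast exist in the shrunk database. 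The paper hides this assumption inside the claim that ``forward mining cannot terminate'' without finding $k$ patterns, which is not literally true of the algorithm as described (Step~4 terminates when the candidate set $C_{m+1}$ is empty, whether or not $Q$ has been filled), so on a degenerate database the paper's contradiction would not go through either. In short: same theorem, same hidden hypothesis, but your direct invariant argument localizes the gap where the paper's contradiction argument papers over it.
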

	
	\begin{proof}
		Proof by contradiction. Assuming that after forward mining, $c_{min}$ is equal to zero. We know that $c_{min}$ is equal to zero which means that we do not find $k$ contrast patterns. If we do not find $k$ contrast patterns, the forward mining cannot terminate, which contradicts the fact that $k$ contrast patterns are found in forward mining.
		\end {proof}
		
		According to Theorem \ref {theoremgreater}, after forward mining, $c_{min}>$0. Therefore, it is useless to prune candidate patterns by employing the condition of $r(\textbf{p}, D_-)$ = 0. Hence, in reverse mining, we only use the condition of  $r(\textbf{p}, D_-) \leqslant c_{min}$ to prune candidate patterns.
		
		
		
		\textbf{Pruning Strategy 3.} If $r(\textbf{p}, D_-) \leqslant c_{min}$, then pattern \textbf{p} and its super-pattern \textbf{u} will be pruned.

		We know that Pruning Strategy 3  corresponds to Pruning Strategy 2 in forward mining, and Theorem \ref {theorem2} proves the correctness of Pruning Strategy 2. Therefore, it is easy to know  the correctness of Pruning Strategy 3.
		
		\subsection{COPP-Miner}
		\label {sub4.4}
		
		The COPP-Miner algorithm contains the following steps.
		
		Step 1. Use the EPE algorithm to extract the local extreme points in database \textit{D} to obtain a new dataset $D’$. Mine the top-\textit{k} COPPs whose $c(\textbf{p}, D’)$ is $r(\textbf{p}, D’_+) - r(\textbf{p}, D’_-)$ in forward mining.
		
		Step 2. Calculate all occurrences of patterns (1,2) and (2,1) in $D’$, then add these patterns into the candidate sets $C_{1,2}$ and $C_{2,2}$, respectively.
		
		Our mining method starts with a 2-length pattern, since a pattern with length one is only a point and has no fluctuation trend. There are two patterns with length two: (1,2) and (2,1). Scan time series \textbf{t} once. If the relative order of a sub-time series ($t_i$, $t_{i+1}$) is (1,2), then $<i$+1$>$, as an occurrence of pattern (1,2), is stored in set $L_{(1,2), \textbf{t}}$. Otherwise, $<i$+1$>$, as an occurrence of pattern (2,1), is stored in set $L_{(2,1), \textbf{t}}$. 
		
		
		Step 3. Select two patterns \textbf{p} and \textbf{q} from two different groups. According to the group pattern fusion strategy, two super-patterns \textbf{u} and \textbf{v} with lengths of \textit{m} + 1 will be generated. Then, we use the SRC algorithm to calculate $r(\textbf{u}, D’)$ and its $L_{\textbf{u}, D’}$, and $r(\textbf{v}, D’)$ and its $L_{\textbf{v}, D’}$. If pattern \textbf{u} or \textbf{v} cannot be pruned by Pruning strategies 1 and 2, then pattern \textbf{u} or  \textbf{v} will be added into the candidate set $C_{m+1}$. If pattern \textbf{u} or \textbf{v} is a top-\textit{k} COPP, then the top-\textit{k} COPP set \textit{Q} will be updated.
		
		Step 4. Iterate Step 3 until $C_{m+1}$ is empty. 
		
		Step 5. Update the top-\textit{k} COPPs whose $c(\textbf{p}, D’)$ is $r(\textbf{p}, D’_-)-r(\textbf{p}, D’_+)$ in reverse mining. Use the existing top-\textit{k} COPP set \textit{Q}, and iterate Steps 2 to 4 and only use Pruning strategy 3 to prune candidate patterns, until no new patterns are added into the top-\textit{k} set.
		
		Finally, the top-\textit{k} COPP set \textit{Q} is obtained.
		
		A demonstration of the COPP-Miner algorithm follows.
		
		\begin{example}\label{example10}
			In this example, the time series database \textit{D} shown in Table \ref{tab1} is used. Suppose \textit{minden} = 0.1. The mining process of  top-3 COPPs is as follows.

			Step 1. We extract the extreme points to obtain a new time series database $D’$. 
			
			
			Step 2. We obtain the matching result set consisting of $L_{((1,2), D’)}$ and $L_{((2,1), D’)}$, respectively, and obtain $C_{1,2}$ = $\{$(1,2)$\}$ and $C_{2,2}$ = $\{$(2,1)$\}$.

			Step 3. We select two patterns (1,2) and (2,1) to generate super-patterns. Now, we take (1,2) $\oplus$ (2,1) as an example. It will generate two super-patterns \textbf{u} = (1,3,2) and \textbf{v} = (2,3,1). We use SRC to obtain $r{((1,3,2), D’_+)}$ = 1 and $r((2,3,1), D’_+)$ = 0. According to pruning strategy 1, (2,3,1) is pruned, and (1,3,2) is added into $C_{1,3}$. We calculate $r((1,3,2), D’_-)$ = 0, therefore, $c((1,3,2), D’)$ = 1, and (1,3,2) is added into the top-\textit{k} COPP set \textit{Q}, i.e., \textit{Q} = $\{$(1,3,2)$\}$. After iterating Step 3, we know that \textit{Q} = $\{$(1,3,2), (1,3,2,4), (2,1,4,3)$\}$ since $c((1,3,2,4), D’)$ = 1.0 and $c((2,1,4,3), D’)$ = 0.67.
			

			
			Now, $c_{min} = c((2,1,4,3), D’)$ = 0.67. From Step 5, we obtain $r((2,3,1), D’_-)$ = 0.67. Since $r((2,3,1), D’_-)$ = 0.67 $\leqslant c_{min}$, according to pruning strategy 3, pattern (2,3,1) is also pruned. Similarly, pattern (3,1,2) are pruned, and no new top-\textit{k} COPPs are generated.
			
			Finally, the top-3 COPP set is \textit{Q} = $\{$(1,3,2), (1,3,2,4), (2,1,4,3)$\}$.
		\end{example}
		
		The pseudo-code of COPP-Miner is shown in Algorithm \ref{Algorithm 3}. 
		
		\begin{algorithm}
			\caption{COPP-Miner}
			\label{Algorithm 3}
			\hspace*{0.02in} \leftline{{\bf Input:}
				\textit{D}, \textit{minden}, \textit{k}, and top-\textit{k} COPP set \textit{Q} }
			\hspace*{0.02in} \leftline{{\bf Output:}
				Top-\textit{k} COPP set \textit{Q} }
			\begin{algorithmic}[1]
				\State Use EPE to extract the local extreme points in \textit{D} to obtain a new dataset $D’$;
				\State \textit{Q} $\leftarrow \{\}$;
				\State \textit{Q} $\leftarrow$ ContrastMiner($D’_+$, $D’_-$, \textit{minden}, \textit{k}, \textit{Q});    $//$ Forward mining  
				\State \textit{Q} $\leftarrow$ ContrastMiner($D’_-$, $D’_+$, \textit{minden}, \textit{k}, \textit{Q});    $//$ Reverse mining
				\State \Return \textit{Q};
			\end{algorithmic}
		\end{algorithm}

		Moreover, the pseudo-code of ContrastMiner is shown in Algorithm \ref{Algorithm 4}.
		
		\begin{algorithm}[htb]
			\caption{ContrastMiner}
			\label{Algorithm 4}
			\hspace*{0.02in} \leftline{{\bf Input:}
				$D_1$,\ $D_2$, \textit{minden}, \textit{k}, and top-\textit{k} COPP set \textit{Q}}
			\hspace*{0.02in} \leftline{{\bf Output:}
				Top-\textit{k} COPP set \textit{Q}}
			\begin{algorithmic}[1]
				\State Calculate all occurrences of patterns (1, 2) and (2, 1) in $D’$, and then store patterns into  $C_{1,2}$ and $C_{2,2}$, respectively;
				\State \textit{m} $\leftarrow$ 2;  
				\While {$C_{1,m}$ $<>$ NULL and $C_{2,m}$ $<>$ NULL}
				\State	$C_{1,m+1} \leftarrow$ ContrastPattern($D_1$, $D_2$, \textit{minden}, \textit{k},  \textit{Q}, $C_{1,m}$, $C_{2,m}$);
				\State Sort $C_{1,m+1}$ in descending order according to its support rate value in $D_1$;
				\State	$C_{2,m+1} \leftarrow$ ContrastPattern($D_1$, $D_2$, \textit{minden}, \textit{k},  \textit{Q}, $C_{2,m}$, $C_{1,m}$);
				\State Sort $C_{2,m+1}$ in descending order according to its support rate value in $D_1$;
				\State $m \leftarrow m+1$
				\EndWhile
				\State \Return \textit{Q};
			\end{algorithmic}
		\end{algorithm}
		
		The pseudo-code of ContrastPattern is shown in Algorithm \ref{Algorithm 5}.

		\begin {algorithm}
		\caption{ContrastPattern}\label{Algorithm 5}
		\hspace*{0.02in} \leftline{{\bf Input:}
			$D_1$,\ $D_2$, \textit{minden}, \textit{k}, \textbf{p}, \textbf{q},  top-\textit{k} COPP set \textit{Q}, $C_{1,m}$, and $C_{2,m}$}
		\hspace*{0.02in} \leftline{{\bf Output:}
			$C_{m+1}$ and \textit{Q}}
		\begin{algorithmic}[1]
			\For {each \textbf{p} in $C_{1,m}$}
			\For {each \textbf{q} in $C_{2,m}$}
			\If {\textbf{p} can fuse with \textbf{q}}
			\State Use Algorithm \ref{Algorithm SRC} to obtain $r(\textbf{u}, D_1)$ and its $L_{\textbf{u}, D_1}$, $r(\textbf{v}, D_1)$ and its $L_{\textbf{v}, D_1}$;
			\If {pattern \textbf{u} or \textbf{v} can be pruned by pruning strategies} 
			\State continue; $//$ Prune \textbf{u} or \textbf{v}
			\EndIf
			\State $C_{m+1} \leftarrow C_{m+1} \cup \textbf{u} \cup \textbf{v}$;
			\State 	Use Algorithm \ref{Algorithm SRC} to obtain $r(\textbf{u}, D_2)$ and its $L_{\textbf{u}, D_2}$, $r(\textbf{v}, D_2)$ and its $L_{\textbf{v}, D_2}$;
			\State Calculate $c(\textbf{u}, D’)$ and $c(\textbf{v}, D’)$ according to Definition  \ref{definition5};
			\State If \textbf{u} and \textbf{v} are top-\textit{k} COPPs, then store them in \textit{Q} and update $c_{min}$;
			\EndIf
			\EndFor 
			\EndFor
			\State \Return $C_{m+1}$  and \textit{Q};
		\end{algorithmic}
		\end {algorithm}
		
		\subsection{Complexity analysis }\label{sub4.5}
		
		\begin{theorem} \label{theorm3}
			The space complexity of the COPP-Miner algorithm is $O(T_o + C \times m + T \times m)$, where $T_o$, $C$, $m$, and $T$ are the total length of the original time series, the number of candidate patterns generated, the maximum length of the patterns, and the total length of the time series database after extracting extreme points, respectively.
		\end{theorem}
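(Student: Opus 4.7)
The plan is to decompose the algorithm's memory footprint into three disjoint components corresponding to the three terms in the claimed bound: storage of the input data, storage of candidate patterns, and storage of the auxiliary occurrence structures used by the SRC routine.

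First I would handle the input data. The raw database $D$ is read once and kept to be passed through EPE, so it occupies $O(T_o)$ space. After EPE the reduced database $D'$ has total length $T$, which will be absorbed into the third term below. Next, I would bound the pattern storage: by hypothesis at most $C$ candidate patterns are ever generated across forward and reverse mining, and each pattern is a sequence of at most $m$ integer relative orders, giving $O(C \times m)$. The top-$k$ set $Q$ holds at most $k$ of these patterns and is also absorbed into $O(C \times m)$.

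The key step is bounding the auxiliary structures used in Algorithm 1. For each pattern $\mathbf{p}$ of length $\ell$ retained during the sweep, the algorithm maintains the occurrence list $L_{\mathbf{p}, \mathbf{t}}$ together with the prefix and suffix sets $\mathcal{P}_{\mathbf{p}, \mathbf{t}}$ and $\mathcal{S}_{\mathbf{p}, \mathbf{t}}$ for each $\mathbf{t} \in D'$. The crucial observation is that, by Definition~\ref{definition2}, every sub-time-series of length $\ell$ has a \emph{unique} OPP, so each of the at most $|\mathbf{t}| - \ell + 1$ positions in $\mathbf{t}$ can appear in $L_{\mathbf{p}, \mathbf{t}}$ for \emph{exactly one} pattern $\mathbf{p}$ of length $\ell$. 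Summing over all time series in $D'$, the total storage for all three lists at level $\ell$ is $O(T)$. Summing over levels $\ell = 2, \ldots, m$ yields $O(T \times m)$.

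Adding the three contributions gives the claimed $O(T_o + C \times m + T \times m)$. The main obstacle I anticipate is the accounting at the third step: one must justify that even though Algorithm~\ref{Algorithm SRC} dynamically shrinks $\mathcal{P}_{\mathbf{p}, \mathbf{t}}$ and $\mathcal{S}_{\mathbf{p}, \mathbf{t}}$, the worst-case simultaneous residency across all levels is still dominated by the ``one-position-per-OPP-per-length'' argument above, rather than by the number of patterns $C$. Once this uniqueness property is invoked, the remaining bookkeeping (union bounds over patterns, over time series, and over levels) is routine.
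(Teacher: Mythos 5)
Your proposal is correct and follows essentially the same decomposition as the paper: input storage $O(T_o)$, pattern storage $O(C\times m)$, and occurrence storage bounded by the fact that each position is an occurrence of at most one pattern per length, giving $O(T\times m)$. Your explicit accounting for the prefix/suffix sets $\mathcal{P}$ and $\mathcal{S}$ (which are subsets of the occurrence lists and so add only a constant factor) is a minor refinement of the paper's argument, not a different route.
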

		\begin{proof}
			The space cost of COPP-Miner consists of three parts: storing the original time series, storing the candidate patterns and storing the occurrences of all patterns. Obviously, the space complexity of storing the original time series is $O(T_o)$. The space complexity of storing the candidate patterns is $O(C \times m)$, since there are $C$ candidate patterns. Now, we analyze the space complexity of storing the occurrences of all patterns. We know that the SRC algorithm calculates the support rate of super-patterns based on the occurrences of sub-patterns. According to Definition 3, an occurrence is represented by the position of a pattern in the time series. Thus, each occurrence corresponds to one position, i.e., a number. Therefore, the space complexity of an occurrence is $O$(1). A position can be many occurrences of patterns with different lengths. For example, in Example 4, $<$4$>$ is an occurrence of pattern (1,3,2), and is also an occurrence of pattern (2,1,4,3). However, a position can at most correspond to an occurrence of a pattern with the same length. Thus, the space complexity of the occurrences of the patterns with the same length is $O(T)$. The maximum length of the patterns is $O(m)$. Therefore, the space complexity of the occurrences of all patterns is $O(T \times m)$. Hence, the space complexity of the COPP-Miner algorithm is $O(T_o + C \times m + T \times m)$.
		\end{proof} 
		
		\begin{theorem} \label{theorm4}
			The time complexity of the COPP-Miner algorithm is $O(T_o + C \times C + T \times m)$.
		\end{theorem}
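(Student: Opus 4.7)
The plan is to mirror the structure of the space complexity proof (Theorem \ref{theorm3}) and decompose the total running time into the three disjoint cost sources: (i) preprocessing the original series by EPE, (ii) generating candidate patterns via the group pattern fusion strategy, and (iii) computing the support rate (and hence the contrast rate) of every candidate using SRC. Since both the forward and reverse phases execute the same three operations (reverse mining just swaps $D_+$ and $D_-$ and uses one fewer pruning strategy), it suffices to bound one phase and absorb the factor of two into the constants.

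First, I would argue that EPE performs a single left-to-right scan of every time series in $D$, inspecting $t_{i-1}, t_i, t_{i+1}$ to decide whether $t_i$ is a local extreme point according to Definition \ref{definition7}. Each comparison is $O(1)$, so the total cost of this preprocessing is $O(T_o)$. Next, for candidate generation, the key observation is that by Theorem \ref{theorem group} the group pattern fusion strategy only attempts to fuse a pattern in Group-1 with a pattern in Group-2 (and symmetrically). Checking whether \textit{suffixop}$(\textbf{p}) = $ \textit{prefixop}$(\textbf{q})$ and computing $\textbf{p} \oplus \textbf{q}$ according to Definition \ref{definition9} takes time proportional to the pattern length $m$, which is dominated by $C$ in the worst case and folded into the constant; across all levels the number of ordered pairs $(\textbf{p}, \textbf{q})$ that are ever examined is at most $C \times C$, since every candidate pattern appears on each side at most once per fusion check. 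This gives the $O(C \times C)$ term.

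The main obstacle is bounding the SRC cost tightly, because naively one might fear that support computations for long patterns repeat work done for short patterns. Here I would invoke the dynamic deletion mechanism discussed after Example \ref{example7}: each position $\langle k \rangle$ in a time series $\textbf{t}$ can serve as an occurrence of at most one pattern of a given length, so the prefix set $\mathcal{P}_{\textbf{p}, \textbf{t}}$ and suffix set $\mathcal{S}_{\textbf{q}, \textbf{t}}$ together contain at most $|\textbf{t}|$ elements at any fixed length level. When Rules 1 and 2 pair an element of $\mathcal{P}_{\textbf{p}, \textbf{t}}$ with an element of $\mathcal{S}_{\textbf{q}, \textbf{t}}$, the work is $O(1)$ (just one index comparison, and possibly one value comparison $t_{first}$ vs.\ $t_{last}$ in the special case), and the matched elements are removed. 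Therefore across all fusions producing patterns of length $m+1$ from patterns of length $m$ in a single series $\textbf{t}$, SRC performs $O(|\textbf{t}|)$ work. Summed over the $y$ series in the database we get $O(T)$ per length level, and summed over the $m$ length levels we get $O(T \times m)$.

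Finally I would combine the three bounds, noting that updating $r(\textbf{p}, D)$, $c(\textbf{p}, D)$, and the top-\textit{k} set $Q$ (a bounded-size heap) per produced candidate costs only $O(\log k)$ which is absorbed into the candidate-generation term. Adding the three contributions yields the total time complexity $O(T_o + C \times C + T \times m)$, completing the proof. The delicate point to write carefully is the amortization argument for SRC — I would make it explicit that the total size of the $\mathcal{P}$ and $\mathcal{S}$ lists at any level is bounded by $T$ and that each deletion is charged to a constant amount of work, so no occurrence is processed twice at the same length level.
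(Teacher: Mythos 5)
Your proposal is correct and follows essentially the same route as the paper: the identical three-way decomposition into EPE scanning ($O(T_o)$), group pattern fusion over cross-group pairs ($O(C\times C)$), and an amortized, per-length-level SRC bound of $O(T)$ summed over $m$ levels ($O(T\times m)$), with the paper likewise deferring the SRC bound to the occurrence-counting argument from the space-complexity proof. The only minor divergence is bookkeeping in the candidate-generation term --- you count total pairs globally and absorb the $O(m)$ per-fusion cost, whereas the paper counts $O(C/(2m))^2$ pairs per level times an $O(m\times m)$ per-level factor --- but both arrive at $O(C^2)$, and your explicit amortization of the $\mathcal{P}/\mathcal{S}$ deletions is, if anything, slightly more careful than the paper's.
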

		\begin{proof}
			The runtime of COPP-Miner also consists of three parts: extracting extreme points, generating all candidate patterns, and calculating the supports of all candidate patterns. Obviously, the time complexity of extracting extreme points is $O(T_o)$, since the length of the original time series is $O(T_o)$, and each value is checked only once. Suppose there are $C$ candidate patterns and the maximum length of the patterns is $m$. Thus, there are $O(C / m)$ candidate patterns for each length on average. The time complexities of determining whether two patterns with length $m$ can be fused and generating a new candidate pattern are both $O(m)$. We propose the group pattern fusion strategy which divides candidate patterns into two groups. Hence, the time complexity of the group pattern fusion strategy for patterns with length $m$ is $O(C/2/m \times C/2/m)$.   Therefore, the time complexity of generating all candidate patterns is $O(m \times m\times C/2/m \times C/2/m)$= $O$($C^{2}$ / 4)= $O$($C^{2}$). As the analysis of the space complexity of the COPP-Miner algorithm, the time complexity of calculating the matching results for all patterns of each length is $O(T)$. The maximum length of all patterns is $m$. Thus, the time complexity of calculating the supports of all candidate patterns is $O(T \times m)$. Hence, the time complexity of the COPP-Miner algorithm is also $O(T_o + C \times C + T \times m)$.
		\end{proof} 				
		
		\section{Experimental results and analysis}\label{section5}
		
		
		To validate the performance of COPP-Miner, we ask the following seven research questions (RQs).

		RQ1: After extracting the local extreme points by EPE, does it have any impact on the performance of COPP-Miner?
		
		RQ2: How does the candidate pattern generation strategy perform, compared with the classic enumeration strategy?
		
		RQ3: Can SRC improve the computational efficiency of COPP-Miner?
		
		RQ4: Can the pruning strategies effectively reduce the number of patterns and improve algorithm performance?

		RQ5: What is the performance of feature extraction of COPP-Miner for classification?
		
		RQ6: Does the value of parameter \textit{minden} affect the running performance and feature extraction for the classification performance?
		
		RQ7: Does the value of parameter \textit{k} affect the running performance and feature extraction for the classification performance?

		To answer RQ1, we propose the COPP-noEPE algorithm and evaluate EPE from two aspects: the running performance is shown in Section \ref{sub5.2} and the classification performance is shown in Section \ref{sub5.3}. In response to RQ2, we develop COPP-enum to explore the effect of the pattern generation strategy in Section \ref{sub5.2}.  To address RQ3, we propose three algorithms, FIM-SW-COPP, KMP-Based-COPP, and Mat-COPP, to investigate the effect of SRC, and the results and analysis are shown in Section \ref{sub5.2}. To answer RQ4, we conducted an ablation experiment in Section \ref{sub-ablation}. To solve RQ5, we select four competitive algorithms: OPP-Miner, OPR-Miner, COPP-noEPE, and COPP-half, and conduct experiments on seven different classifiers, and verify the classification performance by comparing the accuracies of these algorithms in Section \ref{sub5.3}. For RQ6, we verify the running performance with different \textit{minden} values in Section \ref{sub5.4}, and we explore the classification performance with different \textit{minden} values in Section \ref{sub5.5}. For RQ7, we verify the running performance with different \textit{k} values in Section \ref{sub5.6}, and we explore the classification performance with different \textit{k} values in Section \ref{sub5.7}. 
		
		\subsection{Benchmark datasets and baseline methods}\label{sub5.1}
		
		To verify the performance of the COPP-Miner algorithm and its classification performance, we use real time series classification databases as test datasets. All datasets can be downloaded from http://www.timeseriesclassification.com/dataset.php. A specific description of each dataset is given in Table \ref{dataset}.
		
		All experiments were run on a computer with an Intel(R) Core(TM) i5-3230U, 1.60 GHz CPU, 8.0 GB RAM, and Win10 64-bit operating system.
		
		\begin{table}[!htb]
			\renewcommand{\arraystretch}{1.1}	
			\footnotesize
			\captionsetup{font=footnotesize} 
			\caption{Description of benchmark datasets}
			\centering
			\label{dataset}
			\tabcolsep 1pt 
			\begin{tabular}{cccccc}
				\hline\noalign{\smallskip}
				Name & Dataset & $|\textbf{t}|$ & $|D_+|$ & $|D_-|$ &  Total length \\\hline
				DB1 & Mallat-12 & 1024 & 8 & 6 & 14336\\
				DB2 & Wafer & 152 & 97 & 903 & 152000\\
				DB3 & Rock-qm & 2844 & 5 & 5 & 28440\\
				DB4 & Beef & 470 & 24 & 6 & 14100\\
				DB5 & OliveOil-PS & 570 & 13 & 4 & 9690\\
				DB6 & Coffee & 286 & 14 & 14 & 8008\\
				DB7 & Meat & 448 & 20 & 40 & 26880\\
				DB8 & ProximalPhalanxOutlineAgeGroup & 80 & 189 & 211 & 32000\\
				DB9 & DistalPhalanxOutlineAgeGroup & 80 & 143 & 257 & 32000\\
				DB10 & Trace & 275 & 47 & 53 & 27500\\	
				\noalign{\smallskip}\hline
			\end{tabular}
			\begin{tablenotes}
				\item Note1: DB1 is part of the Mallat dataset, which has eight labels. We select the time series with labels 1 and 2 as $D_+$ and $D_-$, respectively.
				\item Note2: DB3 is part of the Rock dataset. There are four labels that correspond to four different types of rocks: mafic, quartzite, marble, and schist. We select the quartzite and marble types of rocks as $D_+$ and $D_-$, respectively.
				\item Note3: Beef dataset contains pure beef and beef with potential adulterations  (heart, tripe, kidney, and liver). We select pure beef and adulterated beef as $D_+$ and $D_-$, respectively.
				\item Note4: DB5 is part of the OliveOil dataset. Each class of this dataset is an extra virgin olive oil from a given country. We select olive oils from Portugal and Spain as $D_+$ and $D_-$, respectively.
				\item Note5: DB7 consists of three classes, chicken, pork, and turkey. We select chicken and turkey as $D_+$ and pork as $D_-$.
				\item Note6: DB8 and DB9 use the outline of one of the phalanges to predict whether the subject is part of one of three age groups: 0–6 years old, 7–12 years old, and 13–19 years old. We select 0–6 and 7–12 years old as $D_+$ and 13–19 years old as $D_-$.
				\item Note7:  DB10 is the Trace dataset which was studied in {\cite{ratanamahatana2004mak}}. For fairness, we use the train samples in {\cite{ratanamahatana2004mak}}, i.e., classes 2 and 6 as $D_+$, and classes 3 and 7 as $D_-$.
				\item Note8: $|\textbf{t}|$ is the length of a single time series.
			\end{tablenotes}
		\end{table}
		
		To verify the performance of COPP-Miner, 13 competitive algorithms are selected. Among them, seven competitive algorithms are proposed to validate the running performance, three competitive algorithms are designed to verify the effectiveness of pruning strategies, and three state-of-the-art algorithms are employed to show the performance of feature extraction for classification.
		

		1) COPP-noEPE and COPP-enum: COPP-noEPE  is proposed to analyze the efficiency of the EPE method and its influence on the classification performance, which does not extract extreme points. COPP-enum is proposed to test the performance of the group pattern fusion strategy, which employs the enumeration strategy to generate super-patterns.

		2) FIM-SW-COPP, KMP-Based-COPP, and Mat-COPP: The three algorithms are proposed to verify the efficiency of the pattern support calculation method in COPP-Miner. FIM-SW-COPP employs the FIM-SW {\cite{wu2022oppm}} method to find COPPs, in which FIM-SW adopts the frequent itemset mining method based on a sliding window to mine frequent OPPs. KMP-Based-COPP uses KMP-Order-Matcher to calculate the support for each candidate pattern, in which KMP-Order-Matcher {\cite{kim2014orde}} is a kind of OPP matching method. Mat-COPP adopts the OPP matching algorithm proposed in {\cite{wu2022oppm}} to calculate the support for each candidate pattern. 
		
		
		3) COPP-original and COPP-minFirst: The two algorithms are used to validate the performance of the support maximum-first strategy.  COPP-original does not apply the support maximum-first strategy, and COPP-minFirst employs the support minimum-first strategy, which is the opposite of the support maximum-first strategy.
		
		4) COPP-noPrun1, COPP-noPrun2, and COPP-noPrun3: The three algorithms are proposed to verify the effectiveness of the pruning strategies, which do not apply Pruning Strategy 1, Pruning Strategy 2,  and Pruning Strategy 3, respectively.
		
		
		5) OPP-Miner, OPR-Miner, and COPP-half: OPP-Miner {\cite{wu2022oppm}} was proposed to mine all frequent OPPs. OPR-Miner {\cite{wu2022oprm}} was used to mine OPRs. We propose COPP-half, which only performs forward mining and does not perform reverse mining.
		

		\subsection{Running performance of COPP-Miner}\label{sub5.2}
		
		To validate the performance of COPP-Miner, we used eight competitive algorithms: COPP-noEPE, COPP-enum, FIM-SW-COPP, KMP-Based-COPP, Mat-COPP, COPP-noPrun, COPP-original, and COPP-minFirst. We performed experiments on the DB1–DB10 datasets, and we set the minimum density threshold \textit{minden} = 0.01 and \textit{k} = 10. Ten rounds of the experiment were performed. Comparisons of the memory consumption, running time, and number of candidate patterns are shown in Tables \ref{tab9m} and \ref{tab9r}, and Fig. \ref{9c}, respectively. Moreover, Table \ref{tab5} shows a comparison between the original dataset length and the dataset length after EPE.
		
		%
		
		\begin{table*}[!htb]
			\renewcommand{\arraystretch}{1.1}	
			\scriptsize
			\captionsetup{font=footnotesize} 
			\caption{Comparison of memory consumption on DB1–DB10 (Mb)}
			\centering
			\label{tab9m}
			\tabcolsep 1pt 
			\begin{tabular*}{\linewidth}{ccccccccccc}
				\hline\noalign{\smallskip}
				Algorithm & DB1 & DB2 & DB3 & DB4 & DB5 & DB6 & DB7 & DB8 & DB9 & DB10\\\hline
				COPP-noEPE & 47.479$\pm$0.110 & 39.551$\pm$0.270 & 26.481$\pm$0.237 & 22.594$\pm$0.279 & 91.459$\pm$0.322 & 88.591$\pm$0.290 & 144.223$\pm$0.232 & 79.605$\pm$0.227 & 75.638$\pm$0.251 & 28.488$\pm$0.266 \\
				COPP-enum & 75.220$\pm$0.131 & 82.553$\pm$0.269 & 59.573$\pm$0.283 & 29.505$\pm$0.279 & 61.547$\pm$0.208 & 146.448$\pm$0.337 & 54.436$\pm$0.268 & 53.400$\pm$0.325 & 38.509$\pm$0.326 & 25.388$\pm$0.294 \\
				FIM-SW-COPP & 104.448$\pm$0.171 & 593.636$\pm$0.248 & 47.501$\pm$0.206 & 114.522$\pm$0.273 & 84.791$\pm$0.108 & 302.352$\pm$0.277 & 205.607$\pm$0.179 & 162.441$\pm$0.230 & 197.505$\pm$0.278 & 510.533$\pm$0.317 \\
				KMP-Based-COPP & 42.623$\pm$0.182 & 141.409$\pm$0.267 & 80.570$\pm$0.190 & 83.622$\pm$0.272 & 129.623$\pm$0.286 & 46.481$\pm$0.303 & 136.609$\pm$0.261 & 76.325$\pm$0.204 & 77.429$\pm$0.288 & 24.600$\pm$0.284 \\
				Mat-COPP & 35.088$\pm$0.121 & 120.437$\pm$0.243 & 49.490$\pm$0.193 & 31.289$\pm$0.108 & 32.561$\pm$0.215 & 34.446$\pm$0.209 & 119.367$\pm$0.177 & 60.511$\pm$0.258 & 63.274$\pm$0.151 & 17.448$\pm$0.272 \\
				COPP-original & 12.137$\pm$0.120 & 29.562$\pm$0.274 & 17.670$\pm$0.244 & 16.460$\pm$0.250 & 14.510$\pm$0.314 & 13.334$\pm$0.210 & 19.486$\pm$0.225 & 30.527$\pm$0.332 & 19.619$\pm$0.255 & 3.818$\pm$0.215\\
				COPP-minFirst & 20.784$\pm$0.134 & 32.630$\pm$0.240 & 20.579$\pm$0.295 & 19.506$\pm$0.207 & 16.556$\pm$0.311 & 14.406$\pm$0.268 & 24.486$\pm$0.134 & 34.400$\pm$0.334 & 20.347$\pm$0.300 & 5.444$\pm$0.255 \\
				COPP-Miner & \textbf{11.447$\pm$0.061} & \textbf{26.418$\pm$0.063} & \textbf{15.877$\pm$0.086} & \textbf{13.223$\pm$0.094} & \textbf{13.488$\pm$0.096} & \textbf{11.303$\pm$0.197} & \textbf{18.839$\pm$0.033} & \textbf{26.537$\pm$0.116} & \textbf{17.397$\pm$0.095} & \textbf{3.599$\pm$0.070} \\
				\noalign{\smallskip}\hline
			\end{tabular*} 
		\end{table*}
		
		\begin{table*}[!htb]
			\scriptsize
			\captionsetup{font=footnotesize} 
			\caption{Comparison of running time on DB1–DB10 (s)}
			\centering
			\label{tab9r}
			\tabcolsep 3pt 
			\begin{tabular*}{\linewidth}{ccccccccccc}
				\hline\noalign{\smallskip}
				Algorithm & DB1 & DB2 & DB3 & DB4 & DB5 & DB6 & DB7 & DB8 & DB9 & DB10\\\hline
				COPP-noEPE & 0.285$\pm$0.007 & 0.417$\pm$0.006 & 0.322$\pm$0.005 & 0.207$\pm$0.008 & 0.191$\pm$0.004 & 0.229$\pm$0.007 & 0.337$\pm$0.013 & 0.364$\pm$0.023 & 0.284$\pm$0.004 & 0.367$\pm$0.008 \\
				COPP-enum & 0.401$\pm$0.007 & 0.441$\pm$0.011 & 0.363$\pm$0.010 & 0.233$\pm$0.009 & 0.855$\pm$0.011 & 0.297$\pm$0.015 & 0.429$\pm$0.021 & 0.551$\pm$0.024 & 0.287$\pm$0.007 & 0.503$\pm$0.012 \\
				FIM-SW-COPP & 1.913$\pm$0.008 & 808.530$\pm$0.366 & 61.455$\pm$0.283 & 83.462$\pm$0.259 & 2.061$\pm$0.030 & 9.328$\pm$0.161 & 6.483$\pm$0.194 & 1.545$\pm$0.236 & 1.850$\pm$0.045 & 661.097$\pm$0.283 \\
				KMP-Based-COPP & 0.193$\pm$0.004 & 0.962$\pm$0.018 & 0.456$\pm$0.016 & 0.394$\pm$0.010 & 0.350$\pm$0.011 & 0.346$\pm$0.009 & 0.308$\pm$0.027 & 0.455$\pm$0.026 & 0.617$\pm$0.013 & 0.853$\pm$0.027\\
				Mat-COPP & 0.174$\pm$0.009 & 0.868$\pm$0.015 & 0.353$\pm$0.009 & 0.349$\pm$0.010 & 0.324$\pm$0.003 & 0.264$\pm$0.005 & 0.273$\pm$0.030 & 0.455$\pm$0.023 & 0.668$\pm$0.009 & 0.714$\pm$0.024\\
				COPP-original & 0.152$\pm$0.007 & 0.221$\pm$0.005 & 0.193$\pm$0.007 & 0.176$\pm$0.003 & 0.185$\pm$0.003 & 0.166$\pm$0.003 & 0.193$\pm$0.003 & 0.253$\pm$0.023 & 0.146$\pm$0.003 & 0.215$\pm$0.003\\
				COPP-minFirst & 0.171$\pm$0.008 & 0.236$\pm$0.004 & 0.205$\pm$0.007 & 0.185$\pm$0.003 & 0.196$\pm$0.003 & 0.186$\pm$0.003 & 0.206$\pm$0.003 & 0.281$\pm$0.008 & 0.155$\pm$0.003 & 0.354$\pm$0.009\\
				COPP-Miner & \textbf{0.140$\pm$0.005} & \textbf{0.192$\pm$0.006} & \textbf{0.177$\pm$0.005} & \textbf{0.170$\pm$0.003} & \textbf{0.172$\pm$0.002} & \textbf{0.141$\pm$0.002} & \textbf{0.178$\pm$0.002} & \textbf{0.215$\pm$0.003} & \textbf{0.124$\pm$0.003} & \textbf{0.202$\pm$0.002}\\
				\noalign{\smallskip}\hline
			\end{tabular*} 
		\end{table*}

		\begin{figure}[!htb]
			\centering
			\includegraphics[width=\linewidth]{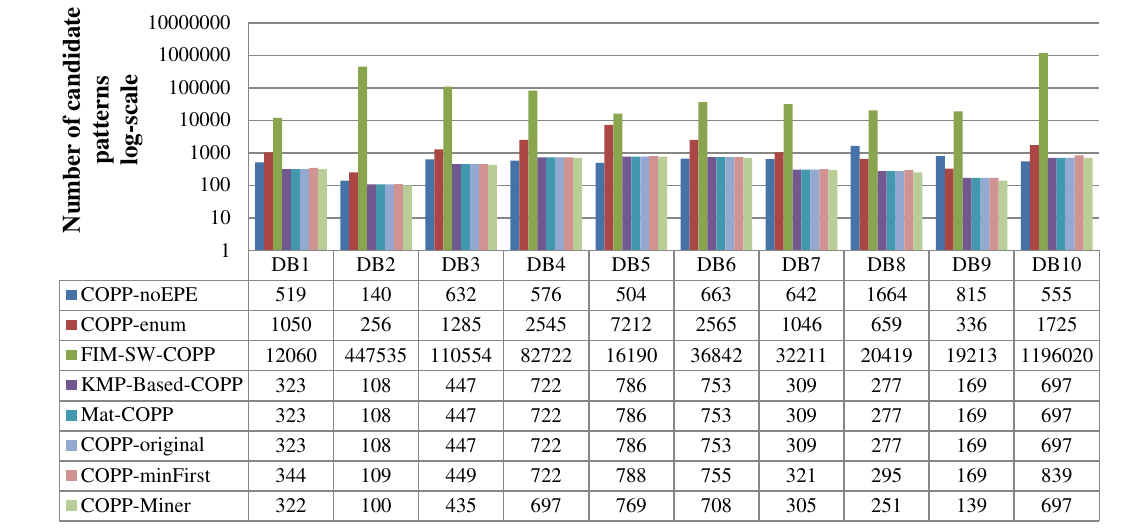}
			\captionsetup{font=footnotesize} 
			\caption{Comparison of number of candidate patterns for DB1–DB10}
			\label{9c}		
		\end{figure}

		\begin{table}[!htb]
			\renewcommand{\arraystretch}{1.1}	
			\scriptsize
			\captionsetup{font=scriptsize} 
			\caption{Comparison between the original dataset length and the dataset length after EPE}
			\centering
			\label{tab5}
			\tabcolsep 1.5pt 
			\begin{tabular}{ccccccccccc}
				\hline\noalign{\smallskip}
				Length&DB1&DB2&DB3&DB4&DB5&DB6&DB7&DB8&DB9&DB10  \\\hline
				\textit{D}&14336&152000&28440&14100&9690&8008&26880&32000&32000&27500  \\
				$D’$&632&62250&4780&2483&823&1555&2172&4948&4777&16006  \\
				\noalign{\smallskip}\hline
			\end{tabular}
		\end{table}
		
		The results give rise to the following observations.
		
		1) COPP-Miner outperforms COPP-noEPE, which indicates that extracting extreme points allows the algorithm to find the classification features more efficiently. For example, Table \ref{tab9m} shows that on DB1, COPP-Miner consumes 11.447 Mb, while COPP-noEPE consumes 47.479 Mb. Table \ref{tab9r} shows that COPP-Miner takes 0.140s, while COPP-noEPE takes 0.285 s. Fig. \ref{9c} shows that COPP-Miner generates 322 candidate patterns, while COPP-noEPE generates 519. We select DB1 as an example. In Table \ref{tab5}, we can see that the original dataset length is 14336, while the dataset length after EPE is 632. We know that the shorter the time series, the less memory the algorithm consumes, the faster the algorithm runs, and the fewer candidate patterns the algorithm generates. Therefore, COPP-Miner occupies less memory, runs faster, and generates fewer candidate patterns. Hence, COPP-Miner outperforms COPP-noEPE.

		
		2) COPP-Miner outperforms COPP-enum, thus demonstrating that the group pattern fusion strategy can efficiently prune candidate patterns. Table \ref{tab9m} shows that COPP-Miner consumes less memory than COPP-enum, and Table \ref{tab9r} shows that COPP-Miner runs faster than COPP-enum.  The reason is that the group pattern fusion strategy can effectively reduce the number of candidate patterns, since the only difference between COPP-Miner and COPP-enum is that the candidate pattern generation strategies are different.  Fig. \ref{9c} shows that  COPP-Miner generates fewer candidate patterns than COPP-enum. The experimental results are therefore consistent with those in Example \ref{example5}. We know that the fewer the candidate patterns, the faster the algorithm runs and the less memory the algorithm consumes. Hence, COPP-Miner outperforms COPP-enum.

		3) The performance of COPP-Miner is better than FIM-SW-COPP, KMP-Based-COPP, and Mat-COPP, since COPP-Miner not only consumes less memory, but also runs faster than FIM-SW-COPP, KMP-Based-COPP, and Mat-COPP.  The reasons are as follows. FIM-SW-COPP uses frequent item sets based on sliding windows to calculate the support, which can be seen as a kind of brute-force method, since the relative order for each window needs to be calculated. More importantly, when the window size is \textit{m}, this method will handle m\textit{}! candidate patterns. Fig. \ref{9c} also reveals this phenomenon, since the number of the candidate patterns of FIM-SW-COPP is significantly greater than other methods. Thus, COPP-Miner is better than FIM-SW-COPP. For KMP-Based-COPP and Mat-COPP, the two algorithms employ pattern matching strategies that cannot use the results of sub-patterns and must scan the database multiple times, which is inefficient. In contrast, COPP-Miner uses the results of the sub-patterns to calculate the support of the super-patterns, avoiding redundant calculations and improving efficiency. Thus, COPP-Miner is better than KMP-Based-COPP and Mat-COPP. Hence, the experimental results indicate that COPP-Miner employing the SRC algorithm is more effective than other competitive algorithms in calculating the support rate.

		4) COPP-Miner is slightly better than COPP-original and COPP-minFirst, which indicates that the support maximum-first strategy can reduce the number of support calculations for candidate patterns and improve the efficiency of COPP-Miner. The difference between COPP-original and COPP-Miner is that COPP-original does not extract extreme points, while COPP-Miner does. COPP-Miner discovers the COPPs on the time series after extracting extreme points which is shorter than the original time series. It is easy to know the shorter the time series, the shorter the runtime, and the less the memory consumption. Hence, COPP-Miner runs faster and consumes less memory than COPP-original. The difference between COPP-minFirst and COPP-Miner is that COPP-minFirst employs the support minimum-first strategy, while COPP-Miner adopts the support maximum-first strategy. Obviously, the higher the support rate in D+, the easier it is to discover contrast patterns with higher contrast rates. According to pruning strategies 2 and 3, the higher the minimum contrast value, the more candidate patterns will be pruned. Therefore, the fewer the support calculations for candidate patterns, the faster the algorithm runs. Hence, COPP-Miner runs faster than COPP-minFirst. All in all, COPP-Miner is slightly better than COPP-original and COPP-minFirst.

		
		
		
		
		\subsection{Ablation experiment}\label{sub-ablation}
		
		
		
		
		
		We designed an ablation experiment to verify the effectiveness of pruning strategies and conducted experiments on DB1–DB10 datasets. In all experiments, we set the minimum density threshold \textit{minden} to 0.01 and \textit{k} to 10. Comparisons of the number of candidate patterns and running time are shown in Figs. \ref{prun5c} and \ref{prun5r}, respectively.

		\begin{figure}[!htb]
			\centering
			\includegraphics[width=\linewidth]{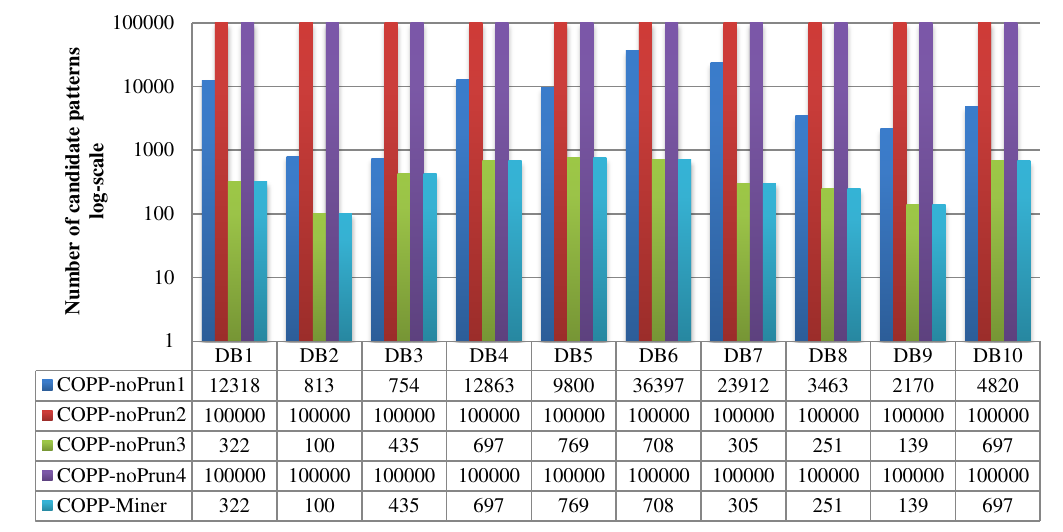}
			\captionsetup{font=footnotesize} 
			\caption{Comparison of number of candidate patterns for ablation experiment}
			\label{prun5c}		
		\end{figure}
		
		\begin{figure}[!htb]
			\centering
			\includegraphics[width=\linewidth]{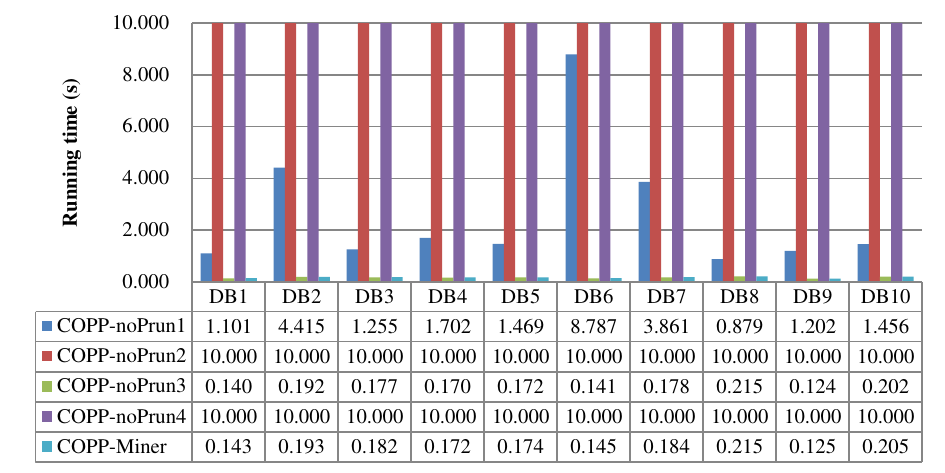}
			\captionsetup{font=footnotesize} 
			\caption{Comparison of running time for ablation experiment}
			\label{prun5r}		
		\end{figure}
		
		The results give rise to the following observations.
		
		1) The experimental results report that COPP-noPrun2 and COPP-noPrun3 algorithms cannot terminate and do not obtain results. Since COPP-noPrun2 and COPP-noPrun3 cannot terminate, the values of candidate patterns and running time of COPP-noPrun2 and COPP-noPrun3 in Figs. \ref{prun5c} and \ref{prun5r} are set to 100000 and 10 s, respectively. Therefore, the results indicate that Pruning Strategies 2 and 3 are crucial, and without them, COPP-Miner cannot stop.
		
		
		2) The number of candidate patterns  and the running time of COPP-Miner are much less than that of COPP-noPrun1, which indicates that Pruning Strategy 1 can effectively reduce the number of candidate patterns, thus improving the efficiency of COPP-Miner. As can be seen from Figs. \ref{prun5c} and \ref{prun5r}, COPP-Miner generates fewer candidate patterns than COPP-noPrun1 and COPP-Miner runs faster than COPP-noPrun1 on all datasets. The reason is as follows. For our problem, users do not know the minimum contrast threshold. Users only need to set parameter \textit{k}. Thus, at the beginning, $c_{min}$ is 0. If we do not employ Pruning Strategy 1, then all patterns must be used as sub-patterns and generate super-patterns, until we find $c_{min}$ $\textgreater$ 0. We know that our mining method in single-class time series satisfies anti-monotonicity which means that if the support of a pattern is 0, then the supports of all its super-patterns are 0. All these patterns cannot be contrast patterns. Thus, if we do not employ Pruning Strategy 1, then many redundant candidate patterns will be generated. Hence, Pruning Strategy 1 is an effective pruning strategy.
		

		\subsection{Classification performance comparison}\label{sub5.3}
		
		To verify the classification ability of COPPs mined by COPP-Miner, we selected four competitive algorithms: OPP-Miner, OPR-Miner, COPP-noEPE, and COPP-half. The mining results of OPP-Miner {\cite{wu2022oppm}}, OPR-Miner {\cite{wu2022oprm}}, COPP-noEPE, COPP-half, and COPP-Miner are denoted as OPP-FP, OPR-FP, COPP-NE, COPP-half, and COPP, respectively. Moreover, the original dataset is denoted as Raw. As we know, this paper focuses on mining the top-\textit{k} COPPs, which refers to the \textit{k} patterns with the highest contrast rates in the time series. To evaluate the performance of the top-10 COPPs, the top-10 patterns are selected as classification features. To avoid the unconvincing classification results of a single classifier, we selected five classical classification algorithms: SVM, C4.5, CART, Adaboost, and KNN, since they are the top-10 algorithms {\cite{wu2008top}}. Then, we selected an adapted version of  KNN, KNN-DTW {\cite{KNN-DTW}}, to work with time series data. Moreover, we also selected two deep learning algorithms, DBC-Forest {\cite{ma2022dbcf}} and HW-Forest {\cite{ma2022hwfo}}. 
		
		The parameters are set as follows. For SVM, we selected the RBF kernel with default parameters. For C4.5 and CART classifiers, we adopted the best features when the decision tree is constructed. In the feature selection process, C4.5 used the maximum information gain ratio criterion, and CART employed the Gini index minimization criterion to select the best features. For Adaboost, n\_estimators is set to 10, and the learning\_rate is set to 1.0. For KNN, parameter \textit{k} is set to 10. For KNN-DTW, parameter \textit{n} is set to 3 and the distance is DTW. For DBC-Forest and HW-Forest, each level was produced by five-fold cross-validation and each forest contained 50 decision trees.
		
		For simplicity, we adopted prediction accuracy as the classification performance measurement and employed a five-fold cross-validation method to evaluate the classification performance of COPPs. The experiments were conducted with \textit{k} = 10 and \textit{minden} = 0.01. Table \ref{tab-accuracy} contains a comparison of the accuracies.

		\begin{table*}[htbp]
			\footnotesize
			\captionsetup{font=footnotesize} 
			\caption{Comparison of accuracy}
			\centering
			\label{tab-accuracy}
			\tabcolsep 4pt 
			\scalebox{0.9}{
			\begin{tabular}{ccccccccccc}
				\hline\noalign{\smallskip}
				Dataset&Feature&SVM&C4.5&CART&Adaboost&KNN& \thead{KNN-\\DTW} & \thead{DBC-\\Forest} & \thead{HW-\\Forest} & Average \\\hline
				\multirow{6}{*}{DB1}
				&Raw& 1.000$\pm$0.000&	0.700$\pm$0.017& 0.783$\pm$0.017&	0.667$\pm$0.003&	0.833$\pm$0.000&	1.000$\pm$0.000&	1.000$\pm$0.000&	1.000$\pm$0.000 & 0.873\\
				&OPP-FP&0.567$\pm$0.000&	0.717$\pm$0.027&	0.783$\pm$0.019&	0.650$\pm$0.000&	0.667$\pm$0.000&  0.786$\pm$0.000&	0.667$\pm$0.002&	0.667$\pm$0.004 &0.688\\
				&OPR-FP&	0.583$\pm$0.000&	0.800$\pm$0.033&	0.800$\pm$0.025&	0.717$\pm$0.033&	0.800$\pm$0.000& 0.857$\pm$0.000&	0.667$\pm$0.003&	0.727$\pm$0.003 & 0.744 \\
				&COPP-NE& 1.000$\pm$0.000&	1.000$\pm$0.000&	1.000$\pm$0.000&	1.000$\pm$0.000&	1.000$\pm$0.000&	1.000$\pm$0.000&	1.000$\pm$0.000&	1.000$\pm$0.000 &1.000\\
				&COPP-half& 1.000$\pm$0.000&	1.000$\pm$0.000&	1.000$\pm$0.000&	1.000$\pm$0.000&	1.000$\pm$0.000& 1.000$\pm$0.000&	1.000$\pm$0.000&	1.000$\pm$0.000 &1.000\\
				&COPP& \textbf{1.000$\pm$0.000}& \textbf{1.000$\pm$0.000}& \textbf{1.000$\pm$0.000}& \textbf{1.000$\pm$0.000}& \textbf{1.000$\pm$0.000}& \textbf{1.000$\pm$0.000}&	\textbf{1.000$\pm$0.000}& \textbf{1.000$\pm$0.000} &\textbf{1.000}\\\hline
				\multirow{6}{*}{DB2}
				&Raw&	0.958$\pm$0.000&	0.974$\pm$0.004&	0.961$\pm$0.002&	0.969$\pm$0.002&	0.992$\pm$0.000& 0.975$\pm$0.000&	0.993$\pm$0.001&	0.993$\pm$0.001 &0.977\\
				&OPP-FP&	0.979$\pm$0.000&	0.997$\pm$0.000&	0.997$\pm$0.001&	0.992$\pm$0.000&	0.998$\pm$0.000&	0.995$\pm$0.000& 0.995$\pm$0.002&	0.995$\pm$0.001 &0.994\\
				&OPR-FP&	0.997$\pm$0.000&	0.997$\pm$0.002&	0.997$\pm$0.003&	0.997$\pm$0.002&	0.998$\pm$0.000& 1.000$\pm$0.000&	0.995$\pm$0.002&	0.995$\pm$0.003 &0.997\\
				&COPP-NE&	1.000$\pm$0.000&	0.998$\pm$0.001&	0.997$\pm$0.001&	0.998$\pm$0.000&	1.000$\pm$0.000& 1.000$\pm$0.000&	1.000$\pm$0.000&	1.000$\pm$0.000 &0.999\\
				&COPP-half&	1.000$\pm$0.000&	1.000$\pm$0.000&	0.998$\pm$0.000&	0.999$\pm$0.000&	1.000$\pm$0.000& 1.000$\pm$0.000&	1.000$\pm$0.000&	1.000$\pm$0.000 &0.999\\
				&COPP&	\textbf{1.000$\pm$0.000} & \textbf{1.000$\pm$0.000} &	\textbf{1.000$\pm$0.000}&	\textbf{1.000$\pm$0.000}&	\textbf{1.000$\pm$0.000}&	\textbf{1.000$\pm$0.000}&	\textbf{1.000$\pm$0.000}&	\textbf{1.000$\pm$0.000} &\textbf{1.000}\\\hline
				\multirow{6}{*}{DB3}
				&Raw&0.667$\pm$0.000&0.500$\pm$0.022&0.555$\pm$0.016&0.544$\pm$0.015&0.667$\pm$0.000&0.888$\pm$0.000&0.500$\pm$0.003&0.666$\pm$0.003 &0.624\\
				&OPP-FP&0.611$\pm$0.000&0.500$\pm$0.008&0.611$\pm$0.053&0.617$\pm$0.000&0.722$\pm$0.000&0.800$\pm$0.000&0.500$\pm$0.002&0.500$\pm$0.003 &0.608\\
				&OPR-FP&0.722$\pm$0.000&0.694$\pm$0.033&0.584$\pm$0.033&0.694$\pm$0.000&0.722$\pm$0.000&0.700$\pm$0.000&0.500$\pm$0.002&0.500$\pm$0.002 &0.639\\
				&COPP-NE&0.583$\pm$0.000&0.806$\pm$0.000&0.806$\pm$0.000&0.806$\pm$0.002&0.806$\pm$0.000&0.900$\pm$0.000&0.750$\pm$0.000&0.750$\pm$0.000 &0.776\\
				&COPP-half&1.000$\pm$0.000&1.000$\pm$0.000&0.917$\pm$0.000&1.000$\pm$0.000&1.000$\pm$0.000&1.000$\pm$0.000&1.000$\pm$0.000&1.000$\pm$0.000 &0.990\\
				&COPP&\textbf{1.000$\pm$0.000}&\textbf{1.000$\pm$0.000}&\textbf{1.000$\pm$0.000}&\textbf{1.000$\pm$0.000}&\textbf{1.000$\pm$0.000}&\textbf{1.000$\pm$0.000}&\textbf{1.000$\pm$0.000}&\textbf{1.000$\pm$0.000} &\textbf{1.000}\\\hline
				\multirow{6}{*}{DB4}
				&Raw&0.933$\pm$0.000 &0.962$\pm$0.003 &0.962$\pm$0.002 &0.962$\pm$0.003 &0.962$\pm$0.000 &0.967$\pm$0.000 &0.958$\pm$0.003 &1.000$\pm$0.000 &0.963\\
				&OPP-FP&0.600$\pm$0.000 &0.833$\pm$0.020 &0.800$\pm$0.005 &0.800$\pm$0.002 &0.867$\pm$0.000 &0.867$\pm$0.000 &0.833$\pm$0.000 &0.867$\pm$0.003 &0.808\\
				&OPR-FP&0.733$\pm$0.000 &0.763$\pm$0.013 &0.767$\pm$0.010 &0.700$\pm$0.008 &0.833$\pm$0.000 &0.867$\pm$0.000 &0.834$\pm$0.001 &0.817$\pm$0.002  &0.790\\
				&COPP-NE&0.667$\pm$0.000 &0.667$\pm$0.000 &0.700$\pm$0.005 &0.667$\pm$0.001 &0.867$\pm$0.000 &0.933$\pm$0.000 &0.867$\pm$0.003 &0.791$\pm$0.000  &0.770\\
				&COPP-half&0.667$\pm$0.000 &0.769$\pm$0.013 &0.833$\pm$0.000 &0.767$\pm$0.000 &0.900$\pm$0.000 &0.900$\pm$0.000 &0.833$\pm$0.000 &0.835$\pm$0.003  &0.814\\
				&COPP&\textbf{0.800$\pm$0.000} &\textbf{0.867$\pm$0.002} &\textbf{0.867$\pm$0.000} &\textbf{0.900$\pm$0.000} &\textbf{0.967$\pm$0.000} &\textbf{0.933$\pm$0.000} &\textbf{1.000$\pm$0.000} &\textbf{1.000$\pm$0.000}  &\textbf{0.916}\\\hline
				\multirow{6}{*}{DB5}
				&Raw&0.756$\pm$0.000&0.702$\pm$0.005&0.759$\pm$0.015&0.880$\pm$0.022&0.833$\pm$0.000&0.875$\pm$0.000&0.750$\pm$0.003&0.750$\pm$0.000 &0.788\\ 
				&OPP-FP&0.599$\pm$0.000&0.544$\pm$0.006&0.644$\pm$0.013&0.611$\pm$0.010&0.767$\pm$0.000&0.824$\pm$0.000&0.700$\pm$0.000&0.705$\pm$0.002 &0.674\\ 
				&OPR-FP&0.522$\pm$0.000&0.711$\pm$0.040&0.589$\pm$0.026&0.656$\pm$0.018&0.767$\pm$0.000&0.765$\pm$0.000&0.706$\pm$0.003&0.750$\pm$0.000 &0.683\\ 
				&COPP-NE&0.578$\pm$0.000&0.822$\pm$0.022&0.633$\pm$0.022&0.761$\pm$0.017&0.778$\pm$0.000&0.941$\pm$0.000&0.750$\pm$0.000&0.750$\pm$0.000 &0.752\\ 
				&COPP-half&0.945$\pm$0.000&0.945$\pm$0.000&0.945$\pm$0.001&0.945$\pm$0.000&0.945$\pm$0.000&0.941$\pm$0.000&1.000$\pm$0.000&1.000$\pm$0.000 &0.958\\ 
				&COPP&\textbf{0.945$\pm$0.000}&\textbf{0.945$\pm$0.000}&\textbf{0.945$\pm$0.003}&\textbf{0.945$\pm$0.000}&\textbf{0.945$\pm$0.000}&\textbf{0.942$\pm$0.000}&\textbf{1.000$\pm$0.000}&\textbf{1.000$\pm$0.000} &\textbf{0.958}\\\hline
				\multirow{6}{*}{DB6}
				&Raw&0.671$\pm$0.000&0.688$\pm$0.002&0.688$\pm$0.002&0.646$\pm$0.003&0.833$\pm$0.000&0.833$\pm$0.000&0.769$\pm$0.000 &0.685$\pm$0.002 &0.726\\ 
				&OPP-FP&0.637$\pm$0.000&0.677$\pm$0.010&0.678$\pm$0.025&0.711$\pm$0.026&0.711$\pm$0.000&0.821$\pm$0.000&0.667$\pm$0.000&0.665$\pm$0.003 &0.696\\ 
				&OPR-FP&0.533$\pm$0.000&0.677$\pm$0.003&0.715$\pm$0.022&0.715$\pm$0.016&0.611$\pm$0.000&0.714$\pm$0.000&0.500$\pm$0.000&0.666$\pm$0.003 &0.641\\ 
				&COPP-NE&0.641$\pm$0.000&0.705$\pm$0.015&0.744$\pm$0.028&0.819$\pm$0.000&0.855$\pm$0.000&0.857$\pm$0.000&0.667$\pm$0.000&0.833$\pm$0.000 &0.765\\ 
				&COPP-half&0.789$\pm$0.000&0.792$\pm$0.015&0.757$\pm$0.015&0.756$\pm$0.012&0.789$\pm$0.000&0.857$\pm$0.000&0.833$\pm$0.000&0.833$\pm$0.000 &0.801\\ 
				&COPP&\textbf{0.822$\pm$0.000}&\textbf{0.759$\pm$0.026}&\textbf{0.757$\pm$0.003}&\textbf{0.793$\pm$0.011}&\textbf{0.889$\pm$0.000}&\textbf{0.893$\pm$0.000}&\textbf{0.833$\pm$0.000}&\textbf{0.833$\pm$0.000} &\textbf{0.822}\\\hline 
				\multirow{6}{*}{DB7}
				&Raw&0.678$\pm$0.000&0.729$\pm$0.003&0.693$\pm$0.002&0.846$\pm$0.000&0.783$\pm$0.000&0.917$\pm$0.000&1.000$\pm$0.000&0.836$\pm$0.003 &0.811\\
				&OPP-FP&0.650$\pm$0.000&0.700$\pm$0.014&0.650$\pm$0.011&0.816$\pm$0.010&0.750$\pm$0.000&0.867$\pm$0.000&0.833$\pm$0.000&0.767$\pm$0.003 &0.754\\
				&OPR-FP&0.650$\pm$0.000&0.800$\pm$0.006&0.769$\pm$0.011&0.817$\pm$0.000&0.733$\pm$0.000&0.850$\pm$0.000&0.750$\pm$0.003&0.786$\pm$0.003 &0.769\\
				&COPP-NE&0.700$\pm$0.000&0.817$\pm$0.005&0.800$\pm$0.010&0.847$\pm$0.006&0.783$\pm$0.000&0.850$\pm$0.000&0.750$\pm$0.000&0.850$\pm$0.004 &0.800\\
				&COPP-half&0.867$\pm$0.000&0.850$\pm$0.007&0.850$\pm$0.007&0.850$\pm$0.000&0.850$\pm$0.000&0.833$\pm$0.000&0.833$\pm$0.000&0.833$\pm$0.000 &0.845\\
				&COPP&\textbf{0.867$\pm$0.000}&\textbf{0.900$\pm$0.006}&\textbf{0.900$\pm$0.007}&\textbf{0.900$\pm$0.000}&\textbf{0.900$\pm$0.000}&\textbf{1.000$\pm$0.000}&\textbf{1.000$\pm$0.000}&\textbf{0.917$\pm$0.000} &\textbf{0.923}\\\hline 
				\multirow{6}{*}{DB8}
				&Raw&0.644$\pm$0.000&0.582$\pm$0.003&0.555$\pm$0.003&0.596$\pm$0.002&0.639$\pm$0.000&0.575$\pm$0.000&0.600$\pm$0.000&0.687$\pm$0.002 &0.610\\
				&OPP-FP&0.828$\pm$0.000&0.857$\pm$0.003&0.878$\pm$0.002&0.867$\pm$0.002&0.920$\pm$0.000&0.912$\pm$0.000&0.914$\pm$0.002&0.875$\pm$0.000 &0.881\\
				&OPR-FP&0.870$\pm$0.000&0.845$\pm$0.004&0.833$\pm$0.004&0.892$\pm$0.001&0.922$\pm$0.000&0.925$\pm$0.000&0.875$\pm$0.000&0.909$\pm$0.002 &0.884\\
				&COPP-NE&0.805$\pm$0.000&0.820$\pm$0.001&0.818$\pm$0.001&0.805$\pm$0.003&0.753$\pm$0.000&0.838$\pm$0.000&0.850$\pm$0.000&0.888$\pm$0.003 &0.822\\
				&COPP-half&0.918$\pm$0.000&0.895$\pm$0.001&0.895$\pm$0.000&0.915$\pm$0.000&0.918$\pm$0.000&0.875$\pm$0.000&0.900$\pm$0.003&0.913$\pm$0.000 &0.904\\
				&COPP&\textbf{0.923$\pm$0.000}&\textbf{0.898$\pm$0.001}&\textbf{0.898$\pm$0.001}&\textbf{0.911$\pm$0.000}&\textbf{0.925$\pm$0.000}&\textbf{0.938$\pm$0.000}&\textbf{0.950$\pm$0.000}&\textbf{0.938$\pm$0.001} &\textbf{0.923}\\\hline
				\multirow{6}{*}{DB9}
				&Raw&0.879$\pm$0.000&0.786$\pm$0.003&0.815$\pm$0.003&0.832$\pm$0.002&0.854$\pm$0.000&0.850$\pm$0.000&0.837$\pm$0.002&0.850$\pm$0.000 &0.838\\
				&OPP-FP&0.882$\pm$0.000&0.817$\pm$0.002&0.812$\pm$0.003&0.853$\pm$0.003&0.782$\pm$0.000&0.905$\pm$0.000&0.813$\pm$0.002&0.850$\pm$0.002 &0.839\\
				&OPR-FP&0.870$\pm$0.000&0.845$\pm$0.000&0.827$\pm$0.003&0.860$\pm$0.002&0.870$\pm$0.000&0.838$\pm$0.000&0.863$\pm$0.002&0.838$\pm$0.003 &0.852\\
				&COPP-NE&0.805$\pm$0.000&0.787$\pm$0.004&0.805$\pm$0.001&0.778$\pm$0.000&0.840$\pm$0.000&0.800$\pm$0.000&0.775$\pm$0.003&0.788$\pm$0.002 &0.797\\
				&COPP-half&0.871$\pm$0.000&0.845$\pm$0.002&0.845$\pm$0.000&0.858$\pm$0.000&0.860$\pm$0.000&0.838$\pm$0.000&0.826$\pm$0.002&0.838$\pm$0.003 &0.847\\
				&COPP&\textbf{0.871$\pm$0.000}&\textbf{0.848$\pm$0.001}&\textbf{0.855$\pm$0.000}&\textbf{0.858$\pm$0.000}&\textbf{0.860$\pm$0.000}&\textbf{0.925$\pm$0.000}&\textbf{0.875$\pm$0.000}&\textbf{0.863$\pm$0.003} &\textbf{0.869}\\\hline
				\multirow{6}{*}{DB10}
				&Raw&0.989$\pm$0.000&0.878$\pm$0.002&0.909$\pm$0.003&0.861$\pm$0.003&0.889$\pm$0.000&1.000$\pm$0.000&0.930$\pm$0.001&0.850$\pm$0.000 &0.913\\
				&OPP-FP&0.990$\pm$0.000&0.860$\pm$0.003&0.844$\pm$0.003&0.886$\pm$0.001&0.950$\pm$0.000&0.990$\pm$0.000&0.900$\pm$0.003&0.850$\pm$0.000 &0.909\\
				&OPR-FP&0.840$\pm$0.000&0.859$\pm$0.002&0.840$\pm$0.000&0.859$\pm$0.000&0.879$\pm$0.000&0.910$\pm$0.000&0.850$\pm$0.000&0.875$\pm$0.000 &0.864\\
				&COPP-NE&0.940$\pm$0.000&0.940$\pm$0.000&0.940$\pm$0.000&0.950$\pm$0.000&0.970$\pm$0.000&0.960$\pm$0.000&0.950$\pm$0.000&0.950$\pm$0.000 &0.950\\
				&COPP-half&0.960$\pm$0.000&0.950$\pm$0.000&0.960$\pm$0.000&0.940$\pm$0.000&0.960$\pm$0.000&0.980$\pm$0.000&1.000$\pm$0.000&0.975$\pm$0.000 &0.966\\
				&COPP&\textbf{0.961$\pm$0.000}&\textbf{0.970$\pm$0.000}&\textbf{0.960$\pm$0.000}&\textbf{0.940$\pm$0.000}&\textbf{0.961$\pm$0.000}&\textbf{1.000$\pm$0.000}&\textbf{1.000$\pm$0.000}&\textbf{1.000$\pm$0.000} &\textbf{0.974}\\ 
				\noalign{\smallskip}\hline
			\end{tabular}
		}
		\end{table*}
		
		The results give rise to the following observations.

		1) COPP can further improve the classification performance of COPP-half. As shown in Table \ref{tab-accuracy}, COPP has a slightly better performance than COPP-half in most cases. The reason is as follows. COPP-half only performs the forward mining which means that COPP-half only discovers the patterns that occur frequently in positive class and infrequently in negative class. However, the patterns that occur frequently in negative class and infrequently in positive class are also contrast patterns and can also be used to classify the time series. Thus, COPP-half misses some top-\textit{k} COPPs, since these missed patterns should be discovered by reverse mining. COPP conducts both forward mining and reverse mining. Hence, COPP can further improve the classification performance of COPP-half.

		
		2) COPP has a better classification performance than COPP-NE. These results indicate that extracting extreme points allows the classification features to be found more efficiently. 
		
		
		3)  Raw, OPP-FP, and OPR-FP have poor classification performances, especially in terms of average accuracy.  The reason is that a COPP is a high contrast rate pattern that occurs frequently in one class and infrequently in the other class. Therefore, it is easy to perform classification using COPPs as classification features regardless of which classifier is used. Moreover, as illustrated in Example \ref{example1}, it is difficult to classify time series using frequent patterns. Therefore, Raw, OPP-FP, and OPR-FP have worse classification performances than COPP. Note that it seems that there are some exceptions. For example, the classification performance of Raw is better than COPP on DB4 dataset, especially in SVM model. However, from Table \ref{dataset}, we know that DB4 is an unbalanced dataset. For an unbalanced dataset, F1-score is the commonly used indicator. Thus, the comparison of F1-score on DB4 is shown in Table \ref{tab-F1}. As can be seen from Table \ref{tab-F1}, COPP-Miner is also better than other methods.
		

		\begin{table}[htbp]
			\renewcommand{\arraystretch}{1.1}
			\scriptsize
			\captionsetup{font=scriptsize} 
			\caption{Comparison of F1-score on DB4}
			\centering
			\label{tab-F1}
			\tabcolsep 1pt 
			\begin{tabular}{cccccccccc}
				\hline\noalign{\smallskip}
				Feature& SVM& C4.5& CART& Adaboost& KNN&  \thead{KNN-\\DTW} & \thead{DBC-\\Forest} & \thead{HW-\\Forest} & Average \\\hline
				Raw&0.759 &0.845 &0.881 &0.798 &0.830 &0.817 &0.917 &1.000 &0.856 \\
				OPP-FP&0.601 &0.723 &0.646 &0.759 &0.759 &0.817 &0.869 &0.727 &0.738 \\
				OPR-FP&0.794 &0.715 &0.756 &0.715 &0.759 &0.679 &0.800 &0.667 &0.736 \\
				COPP-NE&0.750 &0.869 &0.906 &0.729 &0.817 &0.817 &0.923 &0.833 &0.831 \\
				COPP-half&0.779 &0.737 &0.741 &0.694 &0.798 &0.900 &0.880 &0.875 &0.801 \\
				COPP&\textbf{0.845} &\textbf{0.800} &\textbf{0.808} &\textbf{0.833} &\textbf{0.884} &\textbf{0.917} &\textbf{1.000} &\textbf{1.000} &\textbf{0.886} \\
				\noalign{\smallskip}\hline
			\end{tabular}
		\end{table}
		
		
		\subsection{Influences of different parameter \textit{minden}}\label{sub5.4}
		
		In this section, we report the influences of different parameter \textit{minden} on classification performance. For clarity, we set \textit{k} = 10. The memory consumption, running time, and number of candidate patterns with different $minden$ values are shown in Figs. \ref{den6m}, \ref{den6r}, and	 \ref{den6c}, respectively.
		
		\begin{figure}[!htb]
			\centering
			\includegraphics[width=\linewidth]{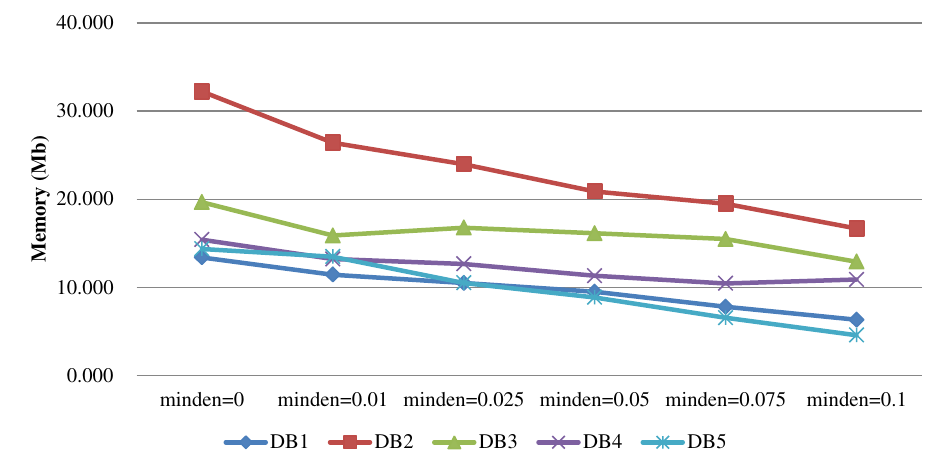}
			\captionsetup{font=footnotesize} 
			\caption{Memory consumption with different \textit{minden} values on DB1–DB5}
			\label{den6m}		
		\end{figure}
		
		\begin{figure}[!htb]
			\centering
			\includegraphics[width=\linewidth]{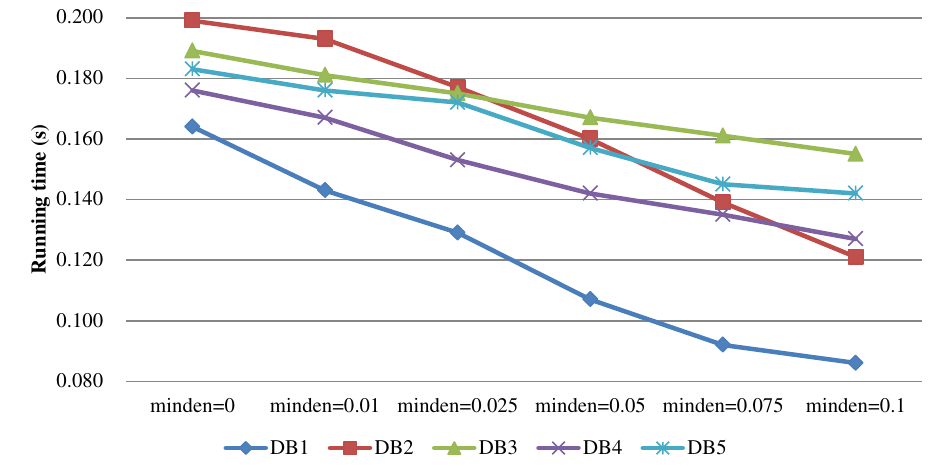}
			\captionsetup{font=footnotesize} 
			\caption{Running time with different \textit{minden} values on DB1–DB5}
			\label{den6r}		
		\end{figure}
		
		\begin{figure}[!htb]
			\centering
			\includegraphics[width=\linewidth]{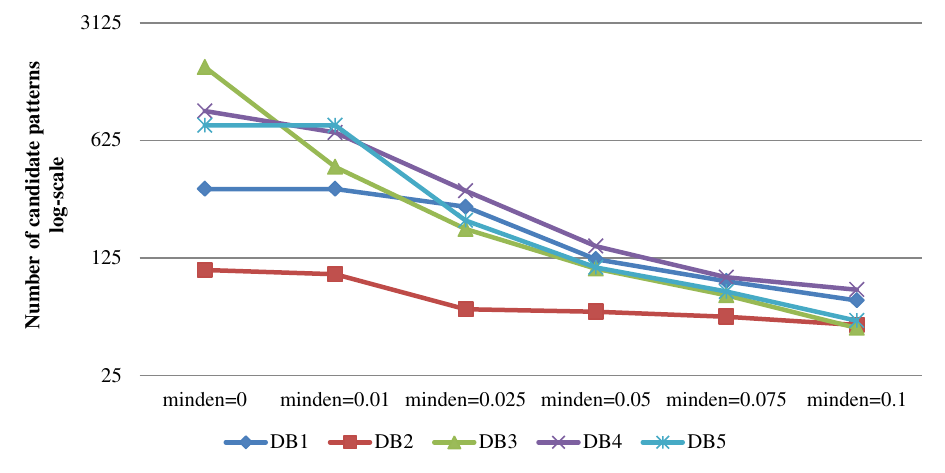}
			\captionsetup{font=footnotesize} 
			\caption{Number of candidate patterns with different \textit{minden} values on DB1–DB5}
			\label{den6c}		
		\end{figure}
		
		
		In Figs. \ref{den6m} to \ref{den6c}, the fluctuation trend of the line chart is downward, that is, in general, as the \textit{minden} value increases, the consumption of memory decreases and the running speed increases, and fewer candidate patterns are generated.  The reasons are as follows. First, with the increase of the \textit{minden} value, the number of candidate patterns whose density is greater than the threshold \textit{minden} decreases significantly, which makes the algorithm consume less memory and run faster. In addition, with the increase of the \textit{minden} value, the minimal contrast rate decreases. That is, more candidate patterns will be generated. Therefore, in general, with the increase of \textit{minden}, the memory consumption decreases, the running speed increases, and the number of candidate patterns decreases.

		\subsection{Classification performance of different parameter \textit{minden}}\label{sub5.5}
		
		To verify the classification performance of COPPs with different \textit{minden} values, six different thresholds are 0, 0.01, 0.025, 0.05, 0.075, and 0.1, and KNN is adopted as the classifier. Fig. \ref{den6acc} shows the classification accuracy with different \textit{minden} values.
		
		\begin{figure}[!htb]
			\centering
			\includegraphics[width=\linewidth]{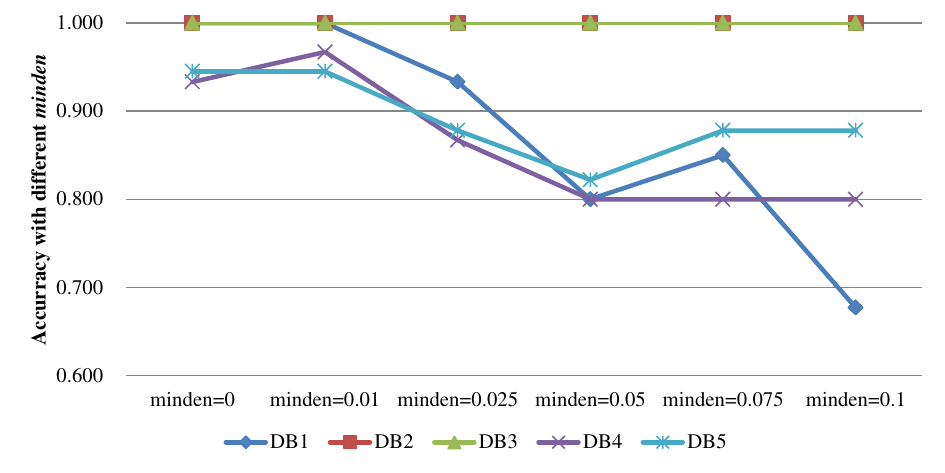}
			\captionsetup{font=footnotesize} 
			\caption{Classification results with different \textit{minden} values on DB1–DB5}
			\label{den6acc}		
		\end{figure}
		
		
		As shown in Fig. \ref{den6acc}, the parameter \textit{minden} has a certain influence on the classification accuracy. The overall trend is that the larger the \textit{minden} value, the lower the classification accuracy. The reason is that with the increase of the \textit{minden} value, the minimal contrast rate decreases.
		
		\subsection{Influences of different parameter \textit{k}}\label{sub5.6}
		
		In this section, we report the influences of different parameter \textit{k}, and we set \textit{minden} = 0.01. The number of candidate patterns, memory consumption, and running time with different \textit{k} values are shown in Figs. \ref{k5c}, \ref{k5m}, and	 \ref{k5r}, respectively.

		\begin{figure}[!htb]
			\centering
			\includegraphics[width=\linewidth]{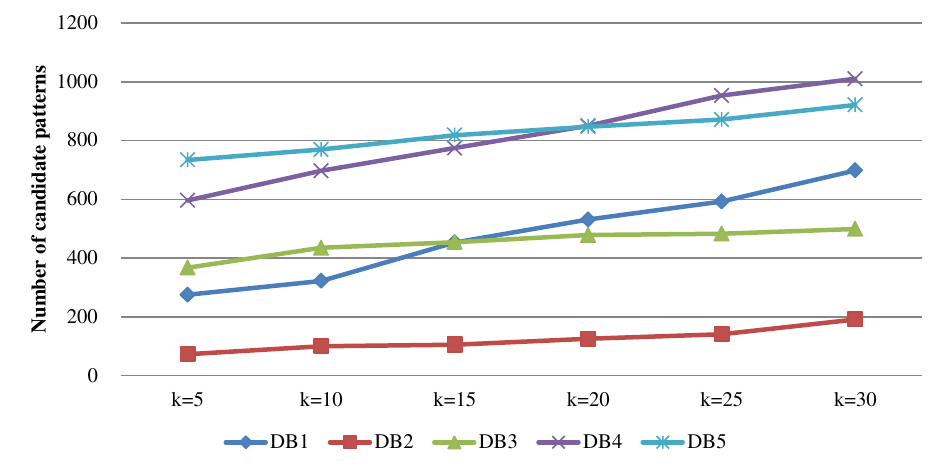}
			\captionsetup{font=footnotesize} 
			\caption{Number of candidate patterns with different \textit{k} values on DB1–DB5}
			\label{k5c}		
		\end{figure}
		
		\begin{figure}[!htb]
			\centering
			\includegraphics[width=\linewidth]{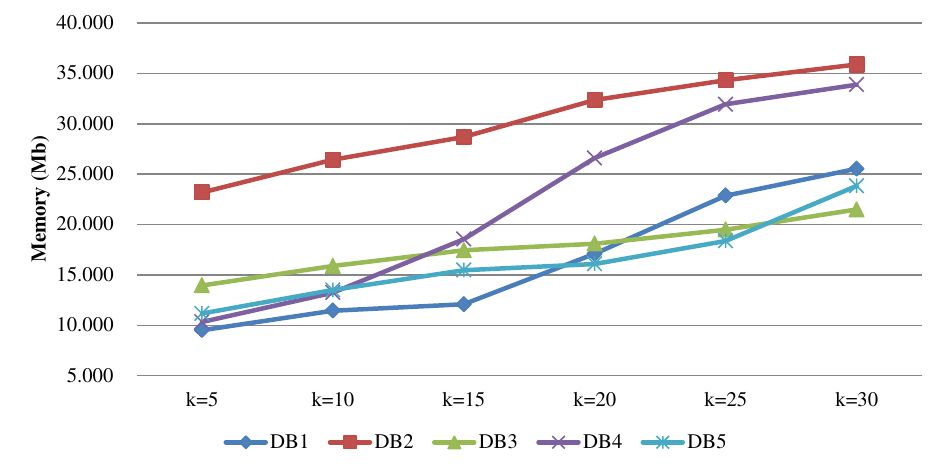}
			\captionsetup{font=footnotesize} 
			\caption{Memory consumption with different \textit{k} values on DB1–DB5}
			\label{k5m}		
		\end{figure}
		
		\begin{figure}[!htb]
			\centering
			\includegraphics[width=\linewidth]{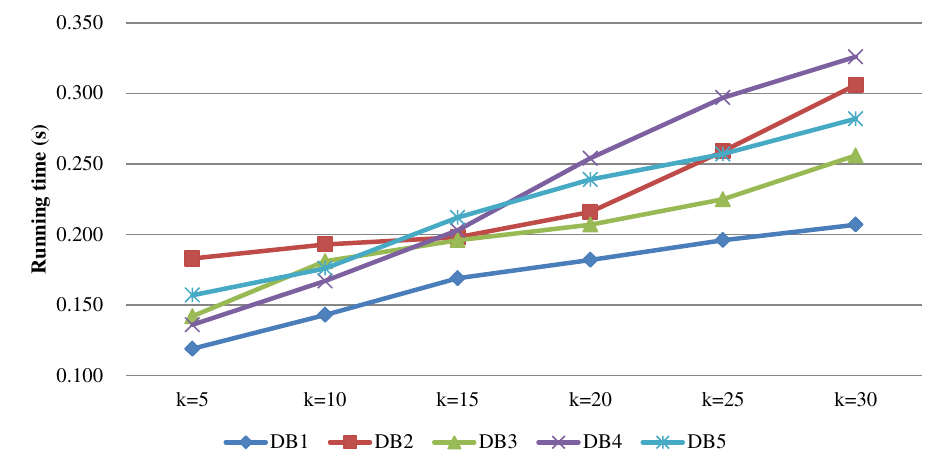}
			\captionsetup{font=footnotesize} 
			\caption{Running time with different \textit{k} values on DB1–DB5}
			\label{k5r}		
		\end{figure}
		
		
		From Figs. \ref{k5c} to \ref{k5r}, we know that the larger the \textit{k} value, the more the candidate patterns, the greater the memory consumption, and the shorter the runtime. The reason is as follows. The increase of the \textit{k} value means a decrease in the minimal contrast rate and an increase in the number of mined patterns. Thus, more candidate patterns will be generated. According to Theorems 6 and 7, we know that the space and time complexities of COPP-Miner are positively correlated with the number of candidate patterns. Thus, with the increase of candidate patterns, the algorithm consumes more memory and more running time. Therefore, in general, with the increase of \textit{k}, the number of candidate patterns increases, the memory consumption increases, and the running time decreases.

		\subsection{Classification performance of different parameter \textit{k}}\label{sub5.7}
		
		To verify the classification performance of COPPs with different \textit{k} values, six different \textit{k} are 5, 10, 15, 20, 25 and 30, and we set \textit{minden} = 0.01. Moreover, we select KNN as the classifier and the parameter of KNN is five. Fig. \ref{k5acc} shows the classification accuracy with different \textit{k} values.
		
		\begin{figure}[!htb]
			\centering
			\includegraphics[width=\linewidth]{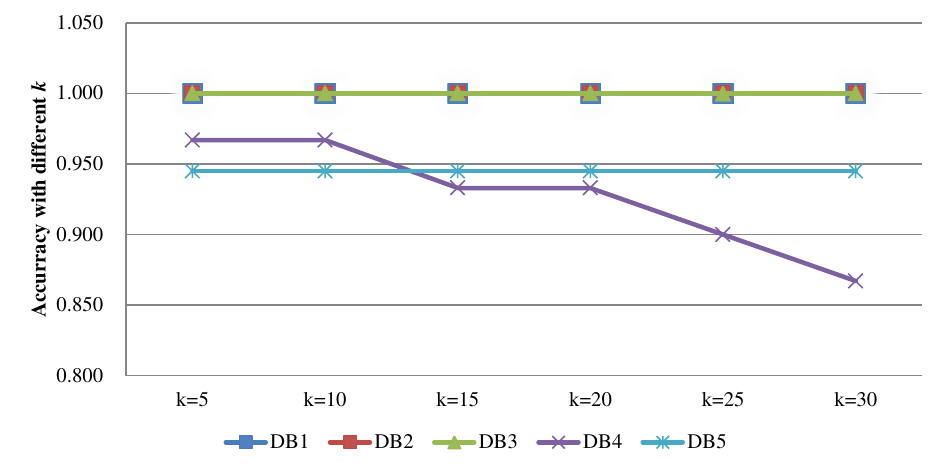}
			\captionsetup{font=footnotesize} 
			\caption{Classification results with different \textit{k} values on DB1–DB5}
			\label{k5acc}		
		\end{figure}

		
		As shown in Fig. \ref{k5acc}, parameter \textit{k} has no effect on the classification accuracy in all datasets, except for DB4. On DB4, the overall trend is that the larger the \textit{k} value, the lower the classification accuracy.  The reason is that with the increase of the \textit{k} value, the minimal contrast rate decreases. Thus, the number of COPPs with low contrast is increased, and contrast is used to represent the significant difference between the two categories of datasets. Therefore, the lower the contrast of patterns, the worse the classification performance. Hence, with the increase of \textit{k} value, the classification accuracy of the algorithm decreases.
		
		\section{Conclusion and future work}\label{section6}
		
		To extract features for time series classification, we utilize the top-\textit{k} COPP mining, where a contrast pattern is a pattern that occurs frequently in one class and infrequently in the other class, and OPP refers to the relative order of a sub-time series, which can easily represent a trend. To effectively discover the top-\textit{k} COPPs, we propose the COPP-Miner algorithm, which consists of three parts: 1) EPE, 2) forward mining, and 3) reverse mining. To compress data, we employ EPE which can effectively obtain the key trends of a time series by extracting extreme points. To effectively mine COPPs, we use forward mining to find COPPs, which contains three steps: the group pattern fusion strategy to generate candidate patterns, the SRC method to efficiently calculate the pattern support, and the pruning strategies to further prune candidate patterns. Finally, reverse mining uses one pruning strategy to prune candidate patterns and consists of applying the same process as forward mining but positive and negative sequences are swapped. The extensive experimental results indicate that COPP-Miner not only has a better running performance than other competitive algorithms, but also discovered top-\textit{k} COPPs can be used as features to improve the performance of time series classification.
		
		
		The following work can be considered in the future.
		
		
		1. For balanced binary classification problems, this paper proposes COPP-Miner that mines contrast order-preserving patterns to extract features for time series classification. How to mine the contrast order-preserving patterns for multi-class classification problems or imbalanced binary classification problems is worth further exploration. 
		
		2. In addition, COPP-Miner is aimed at mining one-dimensional time series. How to mine order-preserving patterns for multi-dimensional data is also worth investigating.
		
		3. Moreover, through Sections \ref{sub5.4} and \ref{sub5.5}, we have verified that the value of \textit{minden} has an impact on the classification performance, and the smaller the \textit{minden}, the lower the speed, but the better the classification performance. The value of \textit{minden} ranges from 0 to 1, and the minimum value is 0. For users without prior knowledge, they can directly select 0 to obtain the optimal classification performance. How to achieve the optimal classification performance and simultaneously reduce the running time which is a multi-objective optimization problem is worth investigating.

\section*{Acknowledgement}
This work was partly supported by the National Natural Science Foundation of China (62372154, 61976240, 52077056, 62120106008).

\end{document}